\documentclass[a4paper]{article}
\usepackage[applemac]{inputenc}
\usepackage[english]{babel}

\usepackage{latexsym}
\usepackage{amssymb}
\usepackage{amsmath}
\usepackage{amsthm}
\usepackage{cite}    
\usepackage{ifthen}
\usepackage{xspace}
\usepackage{tikz}

\usepackage[colorlinks,final,citecolor=blue]{hyperref}

\pagestyle{plain}

\usetikzlibrary{%
  arrows,%
  automata,%
  positioning,%
  decorations.pathmorphing,%
  decorations.pathreplacing,%
  shadows} 


\newcommand{\set}[2]{\left\{#1\mathrel{\left|\vphantom{#1}\vphantom{#2}\right.}#2\right\}}
\newcommand{\oneset}[1]{\left\{\mathinner{#1}\right\}}

\newcommand{\abs}[1]{\left|\mathinner{#1}\right|}

\newcommand{\N}{\mathbb{N}}




\newcommand{\cH}{\ensuremath{\mathcal{H}}\xspace}

\newcommand{\cL}{\ensuremath{\mathcal{L}}\xspace}

\newcommand{\cR}{\ensuremath{\mathcal{R}}\xspace}

\newcommand{\sfA}{\ensuremath{\mathsf{A}}\xspace}
\newcommand{\sfC}{\ensuremath{\mathsf{C}}\xspace}
\newcommand{\sfG}{\ensuremath{\mathsf{G}}\xspace}
\newcommand{\sfT}{\ensuremath{\mathsf{T}}\xspace}

\newcommand{\BAR}{\overline{\phantom{ii}}}

\newcommand{\smalloverline}[1]
{{\mspace{1mu}\overline{\mspace{-1mu}#1\mspace{-1mu}}\mspace{1mu}}}
\newcommand{\ov}[1]{\smalloverline{#1}\vphantom{#1}}
\newcommand{\ovc}[1]{\smalloverline{#1}} 


\newcommand{\NL}{\ensuremath{\mathrm{NL}}\xspace}

\newcommand{\HCS}{\ensuremath{\mathrm{HCS}}\xspace}

\newcommand{\e}{\varepsilon}


\newcommand{\eg}{e.\,g.,\xspace}

\newcommand\Hk{\cH_k}
\newcommand\Hks{\cH_k^{*}}
\newcommand\Ha{\cH_\alp}
\newcommand\Has{\cH_\alp^{*}}
\newcommand\RH{{\cR\!\cH}}
\newcommand\LH{{\cL\!\cH}}

\newcommand{\sse}{\subseteq}
\newcommand{\es}{\emptyset}

\renewcommand{\phi}{\varphi}

\newcommand{\alp}{\alpha}
\newcommand{\bet}{\beta}
\newcommand{\gam}{\gamma}

\newcommand{\sig}{\sigma}

\newcommand{\Sig}{\Sigma}


\newcommand{\gabag}{\gamma\alpha\beta\ov\alpha\ov\gamma}
\newcommand{\gaba}{\gamma\alpha\beta\ov\alpha}
\newcommand{\abag}{\alpha\beta\ov\alpha\ov\gamma}

\newcommand{\aSa}{\alp\Sig^*\alp^{-1}}
\newcommand{\aSSa}{\alpha \Sigma^* \cap \Sigma^* \ov\alp}


\newcommand{\Pa}{\mathrm{P}_{\alp}}
\newcommand{\Sa}{\mathrm{S}_{\ov\alp}}
\newcommand{\ia}{{\mathrm{ind}_{\alp}}}


\newcommand{\ra}{\rightarrow}


\theoremstyle{plain}
\newtheorem{theorem}{Theorem}[section]
\newtheorem{proposition}[theorem]{Proposition}
\newtheorem{lemma}[theorem]{Lemma}
\newtheorem{corollary}[theorem]{Corollary}


\begin{document}

\title{Iterated Hairpin Completions of Non-crossing Words
	\thanks{
		This research was supported
		by the Natural Sciences and Engineering
		Research Council of Canada Discovery Grant R2824A01 and Canada Research
		Chair Award to L.\,K., and 
		by the Funding Program for Next Generation
		World-Leading Researchers (NEXT Program) to Yasushi Okuno,
		the current supervisor of S.\,S.}}

\author{Lila Kari \and Steffen Kopecki \and Shinnosuke Seki}

\maketitle

\begin{abstract}
	Iterated hairpin completion is an operation on formal languages
	that is inspired by the hairpin formation in DNA biochemistry.
	Iterated hairpin completion of a word (or more precisely a singleton language)
	is always a context-sensitive
	language and for some words it is known to be non-context-free.
	However, it is unknown whether regularity of iterated hairpin completion of
	a given word is decidable.
	Also the question whether iterated hairpin completion of a word
	can be context-free but not regular was asked in literature.
	In this paper we investigate iterated hairpin completions of non-crossing words
	and, within this setting, we are able to answer both questions.
	For non-crossing words we prove that the regularity of iterated hairpin completions
	is decidable and that if iterated hairpin completion of a non-crossing word
	is not regular, then it is not context-free either.
\end{abstract}

\section{Introduction}

On an abstract level, a DNA single strand can be viewed as a word over the
four-letter alphabet $\oneset{\sfA, \sfC, \sfG, \sfT}$
where the letters represent the nucleobases adenine, cytosine, guanine, and thymine, respectively.
The {\em Watson-Crick complement} of $\sfA$ is $\sfT$ and the complement of $\sfC$ is $\sfG$.
Two complementary single strands of opposite orientation can bond to each other
and form a DNA double strand.
Throughout the paper, we use the bar-notation for complementary strands of opposite orientation.

In the same manner, a single strand 
can bond to itself if two of its substrands are complementary
and do not overlap with each other.
Such an intramolecular base pairing is called a {\em hairpin}.
We are especially interested in hairpins of single strands 
of the form $\sig = \gaba$.
Here, the substrand $\ov\alp$ can bond to the substrand $\alp$.
Then, by extension, a new single strand can be synthesized
which we call a {\em hairpin completion} of $\sig$, see Figure~\ref{fig:hairpin}.
In this situation we call the substrands that initiate the hairpin completion,
$\alp$ and $\ov\alp$, {\em primers}.

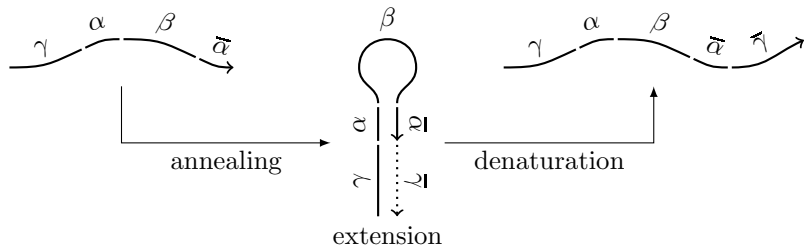
\begin{figure}[ht]
	\center
	\begin{tikzpicture}[text height=1.5ex,text depth=.25ex]
		\def\ang{25}
		\def\dist{3.5cm}
		
		\begin{scope}[above,sloped,every path/.style={thick,shorten <=.75pt,shorten >=.75pt}]
			\draw (-1.5,0) .. controls +(0:.5) and +(180+\ang:.5) .. node {$\gam$} (-.5,.25);
			\draw (-.5,.25) .. controls +(\ang:.25) and +(180:.25) .. node {$\alp$} (0,.375);
			\draw (0,.375) .. controls +(0:.5) and +(180-\ang:.5) .. node {$\bet$} (1,.125);
			\draw [->] (1,.125) .. controls +(-\ang:.25) and +(180:.25) 
				.. node {$\ov\alp$} (1.5,0);
		\end{scope}
		
		\begin{scope}[xshift=\dist, yshift=-1cm, every node/.style={above, sloped},
				every path/.style={thick,shorten <=.75pt,shorten >=.75pt},above,sloped]
			\draw (-.125,-1) -- node {$\gam$} (-.125,0);
			\draw (-.125,0) -- node {$\alp$} (-.125,.5);
			\draw (-.125,.5) .. controls +(90:.25) and +(-90:.25) .. (-.375,1)
				.. controls +(90:.5) and +(90:.5) .. node {$\bet$} (.375,1)
				.. controls +(-90:.25) and +(90:.25) .. (.125,.5);
			\draw [->] (.125,.5) -- node {$\ov\alp$} (.125,0);
			\draw [->,dotted] (.125,0) -- node {$\ov\gam$} (.125,-1);
		\end{scope}

		\begin{scope}[xshift=2*\dist -.5cm,above,sloped,
				every path/.style={thick,shorten <=.75pt,shorten >=.75pt}]
			\draw (-1.5,0) .. controls +(0:.5) and +(180+\ang:.5) .. node {$\gam$} (-.5,.25);
			\draw (-.5,.25) .. controls +(\ang:.25) and +(180:.25) .. node {$\alp$} (0,.375);
			\draw (0,.375) .. controls +(0:.5) and +(180-\ang:.5) .. node {$\bet$} (1,.125);
			\draw (1,.125) .. controls +(-\ang:.25) and +(180:.25) 
				.. node {$\ov\alp$} (1.5,0);
			\draw [->] (1.5,0) .. controls +(0:.5) and +(180+\ang:.5) 
				.. node {$\ov\gam$} (2.5,.375);
		\end{scope}
		
		\begin{scope}
			\draw [-latex] (0,-.25) -- (0,-1)  
				-- node [below] {annealing} (\dist -.75cm,-1);
			\draw [latex-] (2*\dist,-.25) -- (2*\dist,-1)  
				-- node [below] {denaturation} (\dist +.75cm,-1);
			\node at (\dist,-2.25) {extension};
		\end{scope}
	\end{tikzpicture}
	\caption{Hairpin completion of a DNA single strand}\label{fig:hairpin}
\end{figure}

In DNA computing hairpins and hairpin completions are often undesired by-products.
Therefore, sets of strands have been analyzed and designed that are unlikely to form
hairpins or lead to other {\em undesireable hybridization}, see
\cite{Adleman94,AritaKobayashi02,JonoskaMahalingam04,JoKeMa02,KaKoLoSoTh06,PaRoYo01}
and the references within.

However, there are DNA computational models that rely on hairpins,
\eg DNA RAM
\cite{KaYaOhYaHa08,TakinoueSuyama06,TakinoueSuyama04}
and Whiplash PCR
\cite{Hagiya97,Sakamoto98,Winfree98}.
For the Whiplash PCR consider a single strand just like in Figure~\ref{fig:hairpin},
but where the length of extension is controlled by {\em stopper sequences}.
Repeating this operation, DNA can be used to solve combinatorial problems
like the {\sc Hamiltonian path problem}.

Inspired by hairpins in biocomputing, the hairpin completion of a formal language has been
introduced by Cheptea, Mart\'\i{}n-Vide, and Mitrana in \cite{ChepteaMM06}.
In several papers hairpin completion and its iterated variant have been
investigated, see
\cite{DBLP:conf/cie/ManeaM07,ManeaMY09tcs,ManeaMM09,ManeaMY10,Manea10,DieKop10,Kopecki11}.
In this paper we consider iterated hairpin completions of singletons,
that is, informally speaking, iterated hairpin completions of words.
The class of iterated hairpin completions of singletons is denoted by \HCS.
It is known that every language in \HCS is decidable in \NL
(non-deterministic, logarithmic space) as \NL is closed under
iterated hairpin completion \cite{ChepteaMM06};
hence, \HCS is a proper subclass of the context-sensitive languages.
It is also known that \HCS contains regular as well as non-context-free languages \cite{Kopecki11}.
In the latter paper, two open problems have been stated:
\begin{enumerate}
	\item Is it decidable whether the iterated hairpin completion of a singleton is regular?
	\item Does a singleton exist whose iterated hairpin completion is context-free
		but not regular?
\end{enumerate}

We solve both questions for non-crossing words
(or rather, singletons containing a non-crossing word).
A word $w$ is said to be non-crossing if, for a given primer $\alp$, the
right-most occurrence of the factor $\alp$ in $w$ precedes the left-most occurrence
of the factor $\ov\alp$ in $w$, see Section~\ref{sec:non-crossing}.
We provide a necessary and sufficient condition for regularity of
iterated hairpin completion of a given non-crossing word (Theorem~\ref{thm:uequalv} and
Corollary~\ref{cor:condition})
and, since this condition is decidable, we answer the first question positively
(Corollary~\ref{cor:decidable}).
Furthermore, we show that iterated hairpin completion of a non-crossing word
is either regular or it is not context-free (Corollary~\ref{cor:noncf}).
Thus, we give a negative answer to the second question.

This paper is the continuation of the studies in \cite{KariKopeckiSeki}.

\section{Preliminaries}

We assume the reader to be familiar with the fundamental concepts of language theory,
see \cite{HU}.

Let $\Sigma$ be an alphabet, $\Sigma^*$ be the set of all words over $\Sigma$,
and for an integer $k \ge 0$, $\Sigma^k$ be the set of all words of length $k$ over $\Sigma$. 
The word of length 0 is called the empty word, denoted by
$\e$, and we let $\Sigma^+ = \Sigma^* \setminus \oneset\e$. 
A subset of $\Sigma^*$ is called a language over $\Sigma$. 
For a word $w \in \Sigma^*$, we employ the notation $w$
when we mean the word as well as the singleton language $\oneset w$
unless confusion arises. 

We equip $\Sigma$ with a function $\BAR \colon \Sigma \to \Sigma$
satisfying $\forall a \in \Sigma, \overline{\ov{a}} = a$;
such a function is called an {\em involution}. 
This involution $\BAR$ is naturally extended to words as:
for $a_1, \ldots, a_n \in \Sigma$, $\ov{a_1a_2 \cdots a_n} = \ov{a_n} \cdots \ov{a_2}\ov{a_1}$. 
For a word $w \in \Sigma^*$, we call $\ov{w}$ the {\em complement of $w$}, being inspired by this application. 
A word $w \in \Sigma^*$ is called a {\em pseudo-palindrome} if $w = \ov{w}$. 
For a language $L\sse \Sig^*$, we let $\ov{L} = \set{\ov{w}}{ w \in L }$. 

For words $u, w \in \Sigma^*$, if $w = xuy$ holds for some words
$x, y \in \Sigma^*$, then $u$ is called a {\em factor} of $w$;
a factor that is distinct from $w$ is said to be {\em proper}. 
If the equation holds with $x = \e$ ($y = \e$), then the factor $u$
is especially called a {\em prefix} (resp.\ a {\em suffix}) of $w$. 
The prefix relation can be regarded as a partial order $\le_p$ over $\Sig^*$
whereas the proper prefix relation can be regarded as a strict order $<_p$ over $\Sig^*$;
$u \le_p w$ means that $u$ is a prefix of $w$ and 
$u <_p w$ means that $u$ is a proper prefix of $w$.
Analogously, by $w \ge_s u$ (or $w >_s u$) we mean that $u$ is a suffix 
(resp.\ proper suffix) of $w$.
Note that $u\leq_p w$ if and only if $\ov w\geq_s \ov u$.
For a word $w \in \Sigma^*$ and a language $L \sse \Sigma^*$,
a factor $u$ of $w$ is {\em minimal with respect to $L$}
if $u \in L$ and none of the proper factors of $u$ is in $L$.

\subsection{Hairpin Completion}

Let $k$ be a constant that is assumed to be the length of a primer and
let $\alpha \in \Sigma^k$ be a primer. 
If a given word $w \in \Sigma^*$ can be written as
$\gaba$ for some $\gam,\bet \in \Sigma^*$,
then its {\em right hairpin completion} (with respect to $\alpha$) results
in the word $\gabag$. 
By $w \to_{\RH_\alpha} z$,
we mean that $z$ can be obtained from $w$ by right hairpin completion (with respect to $\alpha$).
The {\em left hairpin completion} is defined analogously as an operation to derive
$\gabag$ from $\abag$, and the relation $\to_{\LH_\alpha}$ is naturally introduced. 
We write $w\to_{\cH_\alp}z$ if $w\to_{\RH_\alp} z$ or $w \to_{\LH_\alp} z$.
By $\to_{\LH_\alp}^*$, $\to_{\RH_\alp}^*$, and $\to_{\cH_\alp}^*$
we denote the reflexive and transitive closure of $\to_{\LH_\alp}$,
$\to_{\RH_\alp}$, and $\to_{\cH_\alp}$, respectively. 
Whenever $\alp$ is clear from the context, we omit the subscript $\alp$ and write $\ra_\RH$, $\ra_\LH$,
or $\ra_\cH$, respectively.

For a language $L \subseteq \Sigma^*$, we define the set of words obtained by
hairpin completion from $L$, and the set of words obtained by iterated hairpin
completion from $L$, respectively, as follows: 
\begin{align*}
	\Ha(L) &= \set{z}{\exists w \in L, w \to_{\cH_\alpha} z}, &
	\Has(L) &= \set{z}{ \exists w \in L, w \to_{\cH_\alpha}^* z}. 
\end{align*}

In this paper the hairpin completion is always considered with respect to a fixed primer $\alp$.
However, in other literature the hairpin completion
is often considered with respect to the length $k$ of primers
instead of a specific primer $\alp$ and defined as
\begin{align*}
	&  \Hk(L) = \bigcup_{\alp\in\Sig^k} \Ha(L),
	&& \Hks(L) = \bigcup_{i\ge 0} \Hk^i(L).
\end{align*}

\section{Non-crossing Words and Their Properties}\label{sec:non-crossing}

In this section, we describe some structural properties of non-crossing
words and their iterated hairpin completions
and we introduce the notation of $\alp$-prefixes, $\alp$-suffixes, and $\alp$-indexes.

For a word $\alp$,
we say that $w$ is {\em non-$\alpha$-crossing} if the rightmost occurrence of
$\alpha$ precedes the leftmost occurrence of $\ov\alp$ on $w$
(yet these factors may overlap). 
If $\alpha$ is understood from the context, we simply say that $w$ is {\em non-crossing}. 
Otherwise, the word is {\em $\alpha$-crossing} or {\em crossing}.
The definition of a word $w$ being non-$\alpha$-crossing becomes useful in our work
only if $w \in \alp\Sig^*$ or $w\in \Sig^*\ov\alp$, and therefore, $\alp$ and $\ov\alp$ are primers;
actually, we will assume both.
The main purpose of this paper is to prove a necessary and sufficient condition
for the regularity of the iterated hairpin completion $\Has(w)$,
where $w\in \aSSa$ is non-$\alp$-crossing.

Note that if $w\in \aSSa$ and $\alpha = \ov{\alpha}$,
then either $w = \alp$ and $\Has(w) = \oneset w$
or $w$ can be considered crossing.
Thus, whenever we consider non-crossing words,
we assume that $\alpha \neq \ov{\alpha}$.

Any word obtained from a non-crossing word by hairpin completion is non-crossing. 
Though being easily confirmed, this closure property forms the foundation of our discussions in this paper. 

\begin{proposition}\label{prop:nonx_HC}
	For a non-crossing word $w\in\aSSa$, every word in $\Has(w)$ is non-crossing. 
\end{proposition}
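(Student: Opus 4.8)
The plan is to argue that the non-crossing property is preserved by a single hairpin completion step and then pass to the iterated closure by induction. Since $\Has(w)$ is the reflexive-transitive closure of $\to_{\cH_\alp}$ starting from $w$, it suffices to show: if $v\in\aSSa$ is non-$\alp$-crossing and $v\to_{\cH_\alp}z$, then $z\in\aSSa$ and $z$ is non-$\alp$-crossing. The claim then follows by induction on the length of a derivation $w\to_{\cH_\alp}^* z$, using Proposition hypothesis that $w$ itself is in $\aSSa$ and non-crossing.

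For the single-step claim, consider the right hairpin completion case $v=\gam\alp\bet\ov\alp \to_{\RH_\alp} \gam\alp\bet\ov\alp\ov\gam = z$ (the left case is symmetric, or follows by applying the bar-involution, under which non-$\alp$-crossing words map to non-$\ov\alp$-crossing words and $\RH$ and $\LH$ are interchanged). First I would check $z\in\aSSa$: since $v$ begins with $\alp$ so does $z$, and $z$ ends with $\ov\gam\,$; but $z$ visibly contains $\ov\alp$ as a suffix of $\gam\alp\bet\ov\alp$ followed only by $\ov\gam$, and more to the point $z$ ends in $\ov\gam$ which ends in $\ov\alp$ precisely because $v=\gam\cdots$ starts with $\alp$ (here $\gam$ itself starts with $\alp$ since $v\in\alp\Sig^*$, so $\ov\gam$ ends with $\ov\alp$). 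Hence $z\in\Sig^*\ov\alp$ as well, so $z\in\aSSa$. Next, the non-crossing condition: I must locate the rightmost occurrence of $\alp$ and the leftmost occurrence of $\ov\alp$ in $z=\gam\alp\bet\ov\alp\ov\gam$. The leftmost $\ov\alp$ in $z$ is no later than it is in $v$ (appending $\ov\gam$ on the right cannot create an earlier $\ov\alp$), and since $v$ is non-crossing the rightmost $\alp$ of $v$ precedes that occurrence. The danger is that the newly appended suffix $\ov\gam$ introduces a new occurrence of $\alp$ far to the right, or that $\ov\gam$ creates a new early $\ov\alp$; but any occurrence of $\ov\alp$ inside the appended block $\ov\gam$ corresponds (by the definition of the bar) to an occurrence of $\alp$ inside $\gam$, hence inside the prefix $\gam\alp\bet\ov\alp$ of $v$, and is therefore not earlier than the leftmost $\ov\alp$ of $v$ by non-crossing; and any occurrence of $\alp$ straddling or inside $\ov\gam$ would, again via the bar, force an occurrence of $\ov\alp$ inside $\gam$, contradicting that $\gam$ is a prefix of $v$ lying before $v$'s leftmost $\ov\alp$. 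So no bad new occurrences arise, and $z$ is non-crossing.

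I expect the main obstacle to be the bookkeeping of occurrences that straddle the boundary between $\ov\alp$ and the appended $\ov\gam$ in $z$, as well as occurrences within $\gam$ or $\ov\gam$ that are shorter than $\abs\gam$: one has to be careful that a "rightmost $\alp$" or "leftmost $\ov\alp$" could in principle span a seam, and ruling this out cleanly is where the definition of non-$\alp$-crossing for $v$ (together with $v\in\aSSa$, so $\gam$ starts with $\alp$) must be invoked at full strength. Once the seam cases are handled, the induction step is immediate and the proposition follows.
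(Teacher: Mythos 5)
Your overall strategy---verify that a single hairpin completion step preserves both membership in $\aSSa$ and the non-crossing property, then induct on the length of the derivation---is exactly the ``easy confirmation'' the paper has in mind: the paper states Proposition~\ref{prop:nonx_HC} with no proof at all, so there is nothing to compare against except your own details. Your argument is correct in substance, but three points need tightening. First, $\gam$ need not start with $\alp$: if $0<\abs\gam<k$ (or $\gam=\e$, in which case $z=v$) the assertion ``$\ov\gam$ ends with $\ov\alp$'' is false as stated; what is true, and suffices, is that the last $k$ letters of $z=\gaba\ov\gam$ form the complement of the first $k$ letters of $\gam\alp$, and since $\gam\alp\leq_p v\in\alp\Sig^*$ this prefix is $\alp$, whence $z\in\Sig^*\ov\alp$ in every case. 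Second, in the seam analysis, a new occurrence of $\alp$ in $z$ that uses appended letters corresponds via the bar to an occurrence of $\ov\alp$ inside the prefix $\gam\alp$ of $v$ that \emph{begins} within $\gam$ but may extend into the following factor $\alp$, so ``an occurrence of $\ov\alp$ inside $\gam$'' is slightly too strong; the contradiction is nevertheless immediate, because that occurrence of $\ov\alp$ begins at a position at most $\abs\gam$ while $\alp$ occurs in $v$ beginning at position $\abs\gam+1$, which already violates the non-crossing of $v$. Third, the parenthetical reduction of the left case is misstated: the involution $\BAR$ maps non-$\alp$-crossing words to non-$\alp$-crossing words with respect to the \emph{same} primer $\alp$ (leftmost and rightmost occurrences swap roles under reversal), not to non-$\ov\alp$-crossing words; since it also exchanges $\ra_\RH$ and $\ra_\LH$ and preserves $\aSSa$, the reduction still goes through, as does the direct symmetric argument you mention first. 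With these repairs your proof is complete.
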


Let us provide another characterization for a word $w\in\aSSa$ to be non-crossing.
With Proposition~\ref{prop:nonx_HC}, this characterization will
bring a unique factorization
of any word $z$ in $\Has(w)$ as $z = xwy$ for some words $x, y$
(Corollary~\ref{cor:nonx_initial_once}).

\begin{proposition}\label{prop:characterization}
	A word $w \in \alpha\Sigma^* \cap \Sigma^*\ov\alpha$ is non-crossing
	if and only if it contains exactly one factor $x$ which is minimal
	with respect to $\alpha\Sigma^* \cap \Sigma^*\ov\alpha$.
\end{proposition}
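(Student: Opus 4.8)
The plan is to count \emph{occurrences} of minimal factors and to read off everything from the extremal positions of $\alpha$ and $\ov\alpha$ in $w$. Write $L=\aSSa$ and recall that $k=|\alpha|$. Every word of $L$ begins with $\alpha$ and ends with $\ov\alpha$, so every occurrence in $w$ of a factor that is minimal with respect to $L$ begins at an occurrence of $\alpha$ and ends at an occurrence of $\ov\alpha$; moreover $w$ always has at least one minimal factor, since $w\in L$. Let $p$ be the starting position of the rightmost occurrence of $\alpha$ in $w$ and $q$ that of the leftmost occurrence of $\ov\alpha$; both positions exist since $w\in\aSSa$, and $p\neq q$ since $\alpha\neq\ov\alpha$. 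By definition, $w$ is non-crossing if and only if $p<q$.

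\emph{Only if.} Suppose $p<q$ and let $x$ be the factor of $w$ running from position $p$ to position $q+k-1$; it is well defined and lies in $L$. By the maximality of $p$, the only occurrence of $\alpha$ inside $x$ is the prefix, and by the minimality of $q$, the only occurrence of $\ov\alpha$ inside $x$ is the suffix. Hence a proper factor of $x$ lying in $L$ would have to coincide with the prefix of $x$ (to begin with $\alpha$) and simultaneously with the suffix of $x$ (to end with $\ov\alpha$), and would therefore equal $x$, which is absurd; so $x$ is minimal. If $x'$ is any minimal factor of $w$, then its occurrence starts at a position $\le p$ (it begins with $\alpha$) and ends at a position $\ge q+k-1$ (it ends with $\ov\alpha$), so $x$ is a factor of $x'$; minimality of $x'$ then forces $x'=x$. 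Hence $w$ contains exactly one minimal factor.

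\emph{If.} We prove the contrapositive: if $w$ is crossing, \ie $q<p$, then $w$ has at least two distinct minimal factors. The suffix of $w$ starting at position $p$ lies in $L$ (it starts with $\alpha$, and ends with $\ov\alpha$ because $w\in\Sigma^*\ov\alpha$), so a shortest factor of that suffix lying in $L$, say $x_1$, is a minimal factor of $w$, and by the maximality of $p$ its occurrence starts at position $p$. Symmetrically, the prefix of $w$ ending at position $q+k-1$ lies in $L$ and hence contains a minimal factor $x_2$ of $w$; the only occurrence of $\ov\alpha$ ending at or before position $q+k-1$ is the one at position $q$, so $x_2$ ends at position $q+k-1$, and having length at least $k$ it starts at a position $\le q<p$. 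Thus $x_1$ and $x_2$ start at different positions and are therefore distinct occurrences, so $w$ has more than one minimal factor.

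The only delicate point is the reading of the word ``factor'': the statement counts \emph{occurrences} rather than distinct subwords --- \eg the crossing word $\alpha\ov\alpha\alpha\ov\alpha$ contains two occurrences of the single minimal subword $\alpha\ov\alpha$ --- so ``minimal factor'' must be counted with multiplicity throughout, which is also the reading needed for the unique-factorization statement of Corollary~\ref{cor:nonx_initial_once}. Everything else is routine bookkeeping with positions, driven by the extremality of $p$ and $q$.
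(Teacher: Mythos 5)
Your proof is correct and rests on the same ingredients as the paper's: every minimal factor must begin at an occurrence of $\alpha$ and end at an occurrence of $\ov\alpha$, so everything is governed by the extremal positions $p$ and $q$; you simply prove each implication directly (exhibiting the unique minimal occurrence spanning from $p$ to $q+k-1$ in the non-crossing case, and two distinct minimal occurrences anchored at $p$ and at $q+k-1$ in the crossing case), whereas the paper argues the two contrapositives, replacing your second construction by a short overlap case analysis on two assumed minimal factors. Your observation that ``exactly one factor'' must be counted as occurrences rather than as distinct words agrees with the paper's implicit reading (its proof speaks of two overlapping minimal factors, and Corollary~\ref{cor:nonx_initial_once} likewise counts occurrences), and your example $\alpha\ov\alpha\alpha\ov\alpha$ correctly shows that the distinct-word reading would fail.
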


\begin{proof}
	The word $w$ contains at least one factor from $\alp\Sig^*\cap\Sig^*\ov\alp$
	and hence it contains at least one minimal factor, too.
	
	Suppose $w$ contains exactly one minimal factor $x$
	from $\alp\Sig^*\cap\Sig^*\ov\alp$.
	If the prefix $\alp$ of $x$ was proceeded in $w$ by another factor
	$\alp$, then this factor is again proceeded by $\ov\alp$ and we would find
	a second minimal factor $x'\in\alp\Sig^*\cap\Sig^*\ov\alp$.
	Hence, the prefix of $x$ is the rightmost occurrence of $\alp$ in $w$ and,
	symmetrically, the suffix of $x$ is the leftmost occurrence of $\ov\alp$
	in $w$.
	As the rightmost occurrence of $\alp$ precedes the leftmost occurrence
	of $\ov\alp$, the word $w$ is non-crossing. 
	
	Now suppose $w$ contains at least two minimal factors $x_1,x_2$
	from $\alp\Sig^*\cap\Sig^*\ov\alp$.
	If these factors overlap and the overlapped part is of length at least $k$,
	we may assume $x_1 = y_1y_2$ and $x_2 = y_2y_3$ for words $y_1,y_2,y_3\in\Sig^+$ where
	$y_2$, the overlapping part, is at least $k$.
	We see that $y_2 \in \alp\Sig^* \cap \Sig^*\ov\alp$
	and that $x_1, x_2$ were not minimal with respect to $\alp\Sig^* \cap \Sig^*\ov\alp$.
	Therefore, $x_1, x_2$ overlap in less than $k$ letters (or do not overlap at all)
	and $w$ can be considered crossing.
\end{proof}

\begin{corollary}\label{cor:nonx_initial_once}
	Let $w\in\aSSa$ be non-crossing.
	The factor $w$ occurs exactly once in every word from $\Has(w)$.
\end{corollary}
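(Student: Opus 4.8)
The statement has two halves: that $w$ occurs \emph{at least} once and \emph{at most} once as a factor in each $z\in\Has(w)$. The first half I would dispatch immediately, by induction on the length of a derivation $w\to_{\cH_\alp}^* z$. A single right (resp.\ left) hairpin completion step only appends letters on the right (resp.\ on the left), so the word it produces has the previous word as a prefix (resp.\ as a suffix); hence $w$, which is a factor of the word at every intermediate stage, remains a factor of $z$. So the real content is the uniqueness of the occurrence.

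For uniqueness, the plan is to combine the two preceding propositions. Write $L=\aSSa$. By Proposition~\ref{prop:nonx_HC}, $z$ is non-crossing, so by Proposition~\ref{prop:characterization} — more precisely, by the argument in its proof — $z$ has a unique minimal factor with respect to $L$ occupying a determined position in $z$: its prefix $\alp$ is the rightmost occurrence of $\alp$ in $z$ and its suffix $\ov\alp$ is the leftmost occurrence of $\ov\alp$ in $z$; call this block $x_z$, occurring at positions $p$ through $q$. Likewise $w$ is non-crossing, so it has its own unique minimal factor $x$ with respect to $L$, sitting at a fixed offset $s$ inside $w$ (write $w=w_1 x w_2$ with $\abs{w_1}=s$).

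The key observation I would use is that minimality with respect to $L$ is a property of a word in isolation: it refers to the word and its factors only, not to any surrounding context. Hence, whenever $w$ occurs in $z$ starting at position $i$, the copy of $x$ carried inside that occurrence is a factor of $z$ that still lies in $L$ and still has no proper factor in $L$ — that is, it is a minimal factor of $z$. By the uniqueness recalled above, this copy must coincide with $x_z$, which forces $i+s=p$, i.e.\ $i=p-s$. Thus the starting position of any occurrence of $w$ in $z$ is pinned down by $z$ alone, so $w$ occurs at most once; combined with the first paragraph, it occurs exactly once.

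I do not expect a genuine obstacle: the argument is essentially a careful bookkeeping exercise around Propositions~\ref{prop:nonx_HC} and~\ref{prop:characterization}. The one subtlety to be careful about is occurrences versus subwords — one must make sure that the minimal factor of $z$ is unique \emph{as an occurrence} (which is exactly what the rightmost-$\alp$/leftmost-$\ov\alp$ description in the proof of Proposition~\ref{prop:characterization} provides), since it is the position, not merely the word, that we need in order to solve $i=p-s$. It is also worth recording that non-crossingness is used essentially here: for a crossing word $w$, the factor $w$ may genuinely reappear several times under iterated hairpin completion, so Proposition~\ref{prop:nonx_HC} is doing real work.
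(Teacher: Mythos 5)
Your proof is correct and follows the paper's intended route: the corollary is stated there as a direct consequence of Propositions~\ref{prop:nonx_HC} and~\ref{prop:characterization}, exactly the combination you use, with the positional pinning of the unique minimal factor (rightmost $\alp$ / leftmost $\ov\alp$) doing the work for uniqueness and the trivial prefix/suffix induction giving existence. No gaps.
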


\subsection{\texorpdfstring{$\alp$-Prefixes and $\ovc\alp$-Suffixes}
	{alpha-Prefixes and alpha-Suffixes}}

Let $u, v, w$ be words.
We call $u$ an {\em $\alp$-prefix} of $w$ if $u\alp \leq_p w$.
This means, if $\ov\alp$ is a suffix of $w$, then the suffix can bond to
the factor $\alp$ which directly follows the prefix $u$
(unless they overlap with each other)
and $w\ov u$ can be obtained from $w$ by right hairpin completion.
By $\Pa(w)$, we denote the set of all $\alp$-prefixes of $w$.
Note that if $x, y \in \Pa(w)$ and $\abs{x} \leq \abs{y}$,
then $x\alp \leq_p y\alp$ and $x\in\Pa(y\alp)$.
Symmetrically, we call $\ov v$ an {\em $\ov\alp$-suffix} if $w\geq_s \ov\alp\ov v$
and $\Sa(w)$ is the set of all $\ov\alp$-suffixes of $w$.
If $\alp\leq_p w$ and $\abs w \geq \abs v + 2k$,
then $w \to_\LH vw$.
Note that $\Sa(w) = \ov{\Pa(\ov w)}$.
Therefore, all results we derive for $\alp$-prefixes also hold for the complements
of $\ov\alp$-suffixes.

When $m = \abs{\Pa(w)}$ and $n = \abs{\Sa(w)}$ for a word $w$, then
$w$ is called {\em $(m,n)$-$\alp$-word} (or simply {\em $(m,n)$-word}).
Throughout the paper, it will be convenient to let
$\Pa(w)= \oneset{u_0,\ldots,u_{m-1}}$ and $\Sa(w) = \oneset{\ov{v_0},\ldots,\ov{v_{n-1}}}$
where the words are ordered such that $u_0 <_p \cdots <_p u_{m-1}$ and
$\ov{v_{n-1}} >_s \cdots >_s \ov{v_0}$.
Note that $\Pa(u_i\alp) = \oneset{u_0,\ldots,u_i}$ for $0\leq i < m$
and $\ov{\Sa(\ov\alp\ov{v_j})} = \oneset{v_0,\ldots,v_j}$ for $0\leq j < n$.

Let us begin with a basic observation.

\begin{lemma}\label{lem:alpha}
	For a word $w \in \alp\Sig^* \cap \Sig^*\ov\alp$, the following statements hold:
	\begin{enumerate}
		\item For every $x \in \Pa(w) \cup \ov{\Sa(w)}$, we have $\alp \leq_p x\alp$.
		\item For every $x_1,\ldots,x_\ell \in\Pa(w) \cup \ov{\Sa(w)}$, we have 
			$\alp \leq_p x_\ell\cdots x_1\alp$.
	\end{enumerate}
\end{lemma}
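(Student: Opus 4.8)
The plan is to prove (1) directly and then deduce (2) from it by a short induction on $\ell$; no step is genuinely hard, so the emphasis is on bookkeeping. For (1), I would split according to whether $x$ is an $\alp$-prefix or the complement of an $\ov\alp$-suffix. If $x\in\Pa(w)$, then $x\alp\leq_p w$ by definition, while $\alp\leq_p w$ because $w\in\alp\Sig^*$; two prefixes of the same word are comparable, and since $\abs{x\alp}\geq k=\abs{\alp}$ this forces $\alp\leq_p x\alp$. If $x\in\ov{\Sa(w)}$, I would avoid repeating the argument and instead use the duality $\Sa(w)=\ov{\Pa(\ov w)}$ recorded above, which gives $x\in\Pa(\ov w)$; since $w\in\aSSa$ implies $\ov w\in\aSSa$ (complementation exchanges $\alp\Sig^*$ and $\Sig^*\ov\alp$), the case already handled, applied to $\ov w$, yields $\alp\leq_p x\alp$.

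From (1) I would extract the following absorption property: if $z$ is any word with $\alp\leq_p z$ and $x\in\Pa(w)\cup\ov{\Sa(w)}$, then $\alp\leq_p xz$. This is immediate: writing $z=\alp z'$ we get $xz=x\alp z'$, and (1) gives $\alp\leq_p x\alp\leq_p x\alp z'=xz$. Statement (2) then follows by induction on $\ell$. The base case $\ell=1$ is exactly (1) (and $\ell=0$ is the trivial $\alp\leq_p\alp$); for the inductive step I would put $z=x_{\ell-1}\cdots x_1\alp$, observe that $\alp\leq_p z$ by the induction hypothesis, and apply the absorption property with $x=x_\ell$ to obtain $\alp\leq_p x_\ell z=x_\ell\cdots x_1\alp$.

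The only places that call for any attention — and I do not expect a real obstacle — are keeping the reversed concatenation order straight, so that the induction peels off the \emph{leftmost} factor $x_\ell$, and routing the $\ov\alp$-suffix half of (1) through the complement symmetry $\Sa(w)=\ov{\Pa(\ov w)}$ rather than re-deriving it; note also that $\abs{x\alp}\geq\abs{\alp}$ always holds, so the comparability-of-prefixes step never degenerates.
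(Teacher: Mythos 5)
Your proof is correct and follows essentially the same route as the paper: statement~1 by comparing $x\alp$ (resp.\ the corresponding suffix) with the prefix $\alp$ (resp.\ suffix $\ov\alp$) of $w$ guaranteed by $w\in\aSSa$, and statement~2 by a straightforward induction peeling off the leftmost factor. The only cosmetic difference is that you handle the $\ov{\Sa(w)}$ case via the duality $\Sa(w)=\ov{\Pa(\ov w)}$, whereas the paper argues directly with suffixes of $w$; the substance is identical.
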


\begin{proof}
	If $x\in \Pa(w)$, the first statement derives directly from the definition.
	If $x\in\ov{\Sa(w)}$, then $\ov\alp$ is a suffix of $\ov\alp \ov x$,
	whence $\alp \leq_p x\alp$.
	
	The second statement follows by induction on $\ell$ and the first statement.
\end{proof}

Consider $w \in \alp\Sig^* \cap \Sig^*\ov\alp$.
Note that this means $u_0 = \ov{v_0} = \e$.
It is easy to see that every word $z$ which belongs to $\Ha(w)$ has a
factorization $z = w\ov u$ for some $u\in\Pa(w)$ or
$z = v w$ for some $\ov v\in\Sa(w)$.
By the previous lemma we see that $z \in \alp\Sig^* \cap \Sig^*\ov\alp$ and by induction
$\Has(w) \sse \alp\Sig^* \cap \Sig^*\ov\alp$.

The next lemma tells if $w\in\aSSa$ is a non-crossing $(m,n)$-word
with $n\geq 2$, then the suffix $\ov\alp$ does not overlap with
any of the factors $\alp$ and, therefore, $w \to_\RH w\ov u$ for all $u\in\Pa(w)$.

\begin{lemma}\label{lem:length_nonoverlap_apref_casuf}
	Let $w\in\aSSa$ be a non-crossing $(m,n)$-word with $n\geq 2$
	and let $u_{m-1}$ be the longest $\alp$-prefix in $\Pa(w)$.
	Then $\abs{u_{m-1}}+2k \leq \abs w$ holds.
\end{lemma}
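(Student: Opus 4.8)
The plan is to argue by contradiction: assuming $\abs{u_{m-1}}+2k>\abs w$, I will produce an occurrence of $\alp$ in $w$ strictly to the right of the one witnessed by $u_{m-1}$, contradicting that $u_{m-1}$ is the longest $\alp$-prefix. Write $N=\abs w$ and $p=\abs{u_{m-1}}$. First, $p$ is the position of the rightmost occurrence of $\alp$ in $w$: every occurrence of $\alp$ gives an $\alp$-prefix, so none starts after position $p$; and since $\alp\neq\ov\alp$ while $w\in\Sig^*\ov\alp$, this occurrence is not the suffix, so $p\le N-k-1$. Together with the hypothesis this yields $N-2k<p\le N-k-1$, so the suffix $\ov\alp$ of $w$ (at position $N-k$) overlaps the occurrence of $\alp$ at position $p$ in exactly $h:=p+2k-N$ letters, with $1\le h\le k-1$.

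Next I read off the structure this overlap forces. Comparing the two factors on their overlap gives $\suf_h(\alp)=\pre_h(\ov\alp)$; since $\pre_h(\ov\alp)=\ov{\suf_h(\alp)}$ holds for every word, $\pi:=\suf_h(\alp)$ is a pseudo-palindrome. Writing $A:=\pre_{k-h}(\alp)$, so that $\alp=A\pi$, we obtain $\ov\alp=\ov{A\pi}=\ov\pi\,\ov A=\pi\ov A$. Now I pin down $s:=\suf_{2k-h}(w)$, the suffix of $w$ of length $N-p$. Its prefix of length $k$ is the factor $\alp=A\pi$ occurring at position $p$; its suffix of length $k$ is the trailing $\ov\alp=\pi\ov A$ of $w$; and since $p+k>N-k$, its remaining middle block (of length $k-h=\abs A$) lies inside the trailing $\ov\alp$ and hence equals the last $\abs A$ letters of $\pi\ov A$, which is $\ov A$. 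So $s=A\pi\ov A$, and — this is the key point — $s$ is a pseudo-palindrome: $\ov s=\ov{A\pi\ov A}=\ov{\ov A}\,\ov\pi\,\ov A=A\pi\ov A=s$, using $\ov\pi=\pi$.

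Finally I bring in $n\ge2$ together with non-crossing. Since $w$ is non-crossing, every occurrence of $\ov\alp$ in $w$ starts at a position $\ge p$, and in fact $>p$ (an occurrence at $p$ would force $\alp=\ov\alp$); so all $n\ge2$ occurrences of $\ov\alp$ lie inside $s$, at positions $1,\dots,\abs A$ within $s$. Position $0$ within $s$ is excluded since $\pre_k(s)=A\pi=\alp\neq\ov\alp$, and at most one occurrence is the trailing one at position $\abs A$; hence some occurrence of $\ov\alp$ sits at a position $j$ within $s$ with $1\le j\le\abs A-1$. Complementing that occurrence and using $\ov s=s$, the word $\alp=\ov{\ov\alp}$ occurs in $s$ at position $\abs A-j\ge1$ within $s$, i.e.\ $\alp$ occurs in $w$ at position $p+(\abs A-j)>p$ — contradicting the maximality of $p$. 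Hence $\abs{u_{m-1}}+2k\le\abs w$.

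As for the difficulty: the step I expect to be the crux is spotting that cutting $w$ at its last occurrence of $\alp$ leaves the \emph{pseudo-palindrome} $s=A\pi\ov A$ as soon as that occurrence touches the trailing $\ov\alp$; once this is in hand the rest is bookkeeping. Two points still need care: (i) how factor positions transform under the complement map — a factor occupying positions $[a,b]$ of a word $t$ corresponds to its complement occupying positions $[\abs t-1-b,\abs t-1-a]$ of $\ov t$ — and (ii) checking that all overlaps invoked are genuinely nonempty, which is precisely what the inequalities $N-2k<p\le N-k-1$ guarantee.
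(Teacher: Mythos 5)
Your proof is correct. It is the same basic mechanism as the paper's: assume the inequality fails, observe that the right end of $w$ is then forced to be a pseudo-palindrome, and reflect a second $\ov\alp$-witness through it to manufacture an $\alp$-prefix longer than $u_{m-1}$, contradicting maximality. The execution differs, though. The paper negates the \emph{stronger} inequality $\abs{u_{m-1}}+\abs{v_{n-2}}+2k\le\abs w$, factorizes $w=u_{m-1}x\ov{v_{n-2}}$ with $x\in\aSSa$, $\abs x\le 2k$, deduces $x=\ov x$, and uses the two longest $\ov\alp$-suffixes: writing $v_{n-1}=v_{n-2}y$ it shows $\ov\alp\ov y$ is a suffix of $x$ and reflects it to the longer prefix $u_{m-1}y$. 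You instead negate the stated inequality directly, work positionally with the overlap $h$ between the rightmost $\alp$ and the trailing $\ov\alp$, identify the pseudo-palindromic suffix $s=A\pi\ov A$ starting at the rightmost $\alp$, and use $n\ge 2$ only through occurrence counting (some occurrence of $\ov\alp$ lies strictly inside $s$), reflecting that occurrence to a new occurrence of $\alp$ to the right of $u_{m-1}$. Both are sound; the paper's variant yields the slightly stronger bound involving $\abs{v_{n-2}}$ as a by-product, while yours proves exactly the claimed bound with purely positional bookkeeping and never needs $v_{n-2}$ explicitly.
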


\begin{proof}
	Let $\Pa(w) = \oneset{u_0,\ldots,u_{m-1}}$ such that $u_0 <_p \cdots <_p u_{m-1}$
	and $\Sa(w) = \oneset{\ov{v_0},\ldots,\ov{v_{n-1}}}$
	such that $\ov{v_{n-1}} >_s \cdots >_s \ov{v_0}$ as usual.
	Suppose that the inequality $\abs{u_{m-1}}+\abs{v_{n-2}}+2k \le \abs w$ did not hold
	(note that this inequality is stronger than the one proposed in our claim).
	Being non-crossing, $w$ can be written as $w = u_{m-1}x\ov{v_{n-2}}$
	for some word $x\in \alp\Sig^*\cap \Sig^*\ov\alp$ with $\abs x \leq 2k$.
	Hence $x = \ov x$.
	Let $y$ be the nonempty word satisfying $\ov{v_{m-1}} = \ov y\ov{v_{m-2}}$.
	Since $w$ is non-crossing, $x\ov{v_{n-2}} \ge_s \ov{v_{n-1}}$ must hold,
	from which we have $x \ge_s \ov\alp \ov y$.
	Combining this with $x = \ov x$ enables us to find an $\alp$-prefix $u_{m-1}y$
	of $w$, but this would be longer than the longest $\alp$-prefix of $w$ --- a contradiction.
\end{proof}

Since the analogous argument is valid for left hairpin completion,
Lemma~\ref{lem:length_nonoverlap_apref_casuf} leads us to one important
corollary on non-crossing $(m, n)$-words for $m, n \ge 2$.

\begin{corollary}\label{cor:1st_step_mn_ge2}
	Let $w\in\aSSa$ be a non-crossing $(m,n)$-word with $m,n\ge 2$.
	\begin{equation*}
		\Ha(w) = \oneset w\cup w\ov{\Pa(w)} \cup \ov{\Sa(w)}w.
	\end{equation*}
\end{corollary}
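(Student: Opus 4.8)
The plan is to prove the two inclusions of the identity separately, after observing that the first summand is redundant: since $w\in\aSSa$ we have $u_0=\ov{v_0}=\e$, so $w=w\ov{u_0}\in w\ov{\Pa(w)}$ and $w=v_0w\in\ov{\Sa(w)}w$; the term $\oneset w$ is kept only for emphasis.

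For the inclusion $\Ha(w)\sse\oneset w\cup w\ov{\Pa(w)}\cup\ov{\Sa(w)}w$ I would use the structural description of one-step hairpin completion already recorded before Lemma~\ref{lem:alpha}. Explicitly: if $w\to_{\RH}z$, then $w=\gaba$ for some $\gam,\bet\in\Sig^*$ and $z=\gabag=w\ov\gam$; since $\gam\alp\leq_p w$, the word $\gam$ lies in $\Pa(w)$, so $z\in w\ov{\Pa(w)}$. Symmetrically, $w\to_{\LH}z$ yields $z=vw$ with $\ov v\in\Sa(w)$, so $z\in\ov{\Sa(w)}w$. As $\Ha(w)$ is by definition the union of these two cases, the inclusion follows.

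The reverse inclusion is the place where the hypotheses $m,n\ge 2$ are actually used. Because $n\ge 2$, Lemma~\ref{lem:length_nonoverlap_apref_casuf} applies and gives $\abs{u_{m-1}}+2k\leq\abs w$; since every $\alp$-prefix $u$ of $w$ satisfies $\abs u\leq\abs{u_{m-1}}$, we obtain $\abs u+2k\leq\abs w$ for every $u\in\Pa(w)$ (and in particular $\abs w\geq 2k$). Fixing such a $u$, I would factor $w=u\alp\bet\ov\alp$, where $u\alp$ is the length-$(\abs u+k)$ prefix of $w$, the trailing $\ov\alp$ is the length-$k$ suffix of $w$ (which is $\ov\alp$ because $w\in\Sig^*\ov\alp$), and $\bet$ is the block of length $\abs w-\abs u-2k\geq 0$ in between; the inequality $\abs u+2k\leq\abs w$ is exactly what makes these two blocks disjoint, so that the factorization is legitimate. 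Then $w\to_{\RH}w\ov u$, i.e.\ $w\ov u\in\Ha(w)$, for every $u\in\Pa(w)$. Invoking the left-hairpin counterpart of Lemma~\ref{lem:length_nonoverlap_apref_casuf} (valid since $m\ge 2$), or equivalently applying the above to $\ov w$ via $\Sa(w)=\ov{\Pa(\ov w)}$, the same reasoning gives $\abs v+2k\leq\abs w$ and $w\to_{\LH}vw$ for every $\ov v\in\Sa(w)$. Together with $w\in\Ha(w)$ (the case $u=\e$), this yields $\oneset w\cup w\ov{\Pa(w)}\cup\ov{\Sa(w)}w\sse\Ha(w)$.

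The main obstacle to watch for is precisely this disjointness of the two occurrences of primers inside $w$: a priori an $\alp$-prefix $u$ with $\abs u+2k>\abs w$ (so that $u\alp$ overruns the terminal $\ov\alp$) need not induce any right hairpin completion, and then $w\ov u$ would not obviously be in $\Ha(w)$. Lemma~\ref{lem:length_nonoverlap_apref_casuf} and its mirror image rule this out exactly under the assumption $m,n\ge 2$; this is the single point where the hypothesis is needed, and the remainder is routine manipulation of prefixes, suffixes, and the definition of hairpin completion.
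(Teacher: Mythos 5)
Your proof is correct and follows essentially the same route as the paper: the inclusion $\Ha(w)\sse\oneset w\cup w\ov{\Pa(w)}\cup\ov{\Sa(w)}w$ is the structural observation already recorded before Lemma~\ref{lem:alpha}, and the reverse inclusion is obtained exactly as the paper intends, from Lemma~\ref{lem:length_nonoverlap_apref_casuf} (using $n\ge 2$) together with its mirror image for $\ov\alp$-suffixes (using $m\ge 2$), which guarantee that the primer occurrences do not overlap and hence that $w\to_\RH w\ov u$ and $w\to_\LH vw$ for all $u\in\Pa(w)$ and $\ov v\in\Sa(w)$. Your explicit factorization $w=u\alp\bet\ov\alp$ just spells out the detail the paper leaves implicit.
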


Next, we concern the case when there is a prefix $u\in\Pa(w)$ and a suffix
$\ov{v}\in \Sa(w)$
such that $v\in \Pa(u\alp)^*$ and $u\in\ov{\Sa(\ov\alp\ov v)}^*$.

\begin{lemma}\label{lem:inclusions}
	Let $w\in\aSSa$ be an $(m,n)$-word.
	For $u\in\Pa(w)$ and $\ov v\in\Sa(w)$,
	\begin{enumerate}
		\item if $v\in \Pa(u\alp)^*$,
			then $\ov{\Sa(\ov\alp\ov v)} \sse \Pa(u\alp)^*$.
		\item if $v \in \Pa(u\alp)^*$ and $u\in\ov{\Sa(\ov\alp\ov v)}^*$,
			then $\Pa(u\alp)^* = \ov{\Sa(\ov\alp\ov v)}^*$.
			$\vphantom{\Big|}$ 
	\end{enumerate}
\end{lemma}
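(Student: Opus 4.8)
The plan is to recast the statement into a cleaner symmetric form, prove part~1 by induction, and deduce part~2 formally. Using $\Sa(x)=\ov{\Pa(\ov x)}$ from the preliminaries together with $\ov{\ov\alp\ov v}=v\alp$, I would first rewrite $\ov{\Sa(\ov\alp\ov v)}=\Pa(v\alp)=\set{z}{z\alp\leq_p v\alp}$. Then part~1 says $\Pa(v\alp)\sse\Pa(u\alp)^*$ whenever $v\in\Pa(u\alp)^*$, and part~2 says $\Pa(u\alp)^*=\Pa(v\alp)^*$. The real content is the self-contained claim: \emph{if $\alp\leq_p y\alp$ and $x\in\Pa(y\alp)^*$, then $\Pa(x\alp)\sse\Pa(y\alp)^*$}. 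In the lemma's setting $\alp\leq_p u\alp$ holds because $u\in\Pa(w)$, and $\alp\leq_p v\alp$ holds because $\ov v\in\Sa(w)$ (Lemma~\ref{lem:alpha}), so part~1 is this claim with $(y,x)=(u,v)$. For part~2, the claim with $(y,x)=(u,v)$ gives $\Pa(v\alp)\sse\Pa(u\alp)^*$, hence $\Pa(v\alp)^*\sse\Pa(u\alp)^*$; and the claim with $(y,x)=(v,u)$, using the extra hypothesis $u\in\ov{\Sa(\ov\alp\ov v)}^*=\Pa(v\alp)^*$ of part~2, gives $\Pa(u\alp)\sse\Pa(v\alp)^*$, hence the reverse inclusion.

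To prove the claim, put $P=\Pa(y\alp)$. Since $\e\in P$, a short induction gives $\alp\leq_p v'\alp$ for every $v'\in P^*$ (this is essentially Lemma~\ref{lem:alpha}(2)). I would then induct on the number $r$ of factors in a representation $x=p_1\cdots p_r$ with each $p_\ell\in P\sm\oneset\e$, proving that every $z$ with $z\alp\leq_p x\alp$ lies in $P^*$. For $r=0$ this is immediate. For $r\geq1$: if $\abs z=\abs x$ then $z=x\in P^*$, so assume $\abs z<\abs x$ and compare $\abs{p_1}$ with $\abs z$ and $\abs{z\alp}$. If $\abs z\geq\abs{p_1}$, then $p_1\leq_p z$; writing $z=p_1z''$ we obtain $z''\alp\leq_p(p_2\cdots p_r)\alp$, so $z''\in P^*$ by the induction hypothesis and $z=p_1z''\in P^*$. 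If $\abs{z\alp}\leq\abs{p_1}$, then $z\alp\leq_p p_1\leq_p p_1\alp\leq_p y\alp$, so $z\in P$. This leaves only the case $\abs z<\abs{p_1}<\abs{z\alp}$, in which $z\alp$ overruns the first factor: then $p_1=zs$ and $\alp=s\alp'$ for some nonempty proper prefix $s$ of $\alp$, whence $z\alp=p_1\alp'$ with $\alp'\leq_p(p_2\cdots p_r)\alp$.

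The overrunning case will be the one genuinely delicate point, and I expect it to be settled by observing that the remainder $\alp'$ must be a prefix of $\alp$: both $\alp'$ and $\alp$ are prefixes of $(p_2\cdots p_r)\alp$ --- the former by the previous line, the latter because $p_2\cdots p_r\in P^*$ --- and $\abs{\alp'}\leq\abs\alp$, so $\alp'\leq_p\alp$. Consequently $z\alp=p_1\alp'\leq_p p_1\alp\leq_p y\alp$, the last step since $p_1\in P=\Pa(y\alp)$, and therefore $z\in P$. That would close the induction, proving the claim and hence both parts of the lemma. Everything besides the overrunning case is routine manipulation of the prefix order; the overrunning case is exactly where the ``$\alp$ is a common prefix of every product of $\alp$-prefixes'' property of Lemma~\ref{lem:alpha} is essential.
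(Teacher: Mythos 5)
Your proposal is correct and takes essentially the same route as the paper: after identifying $\ov{\Sa(\ov\alp\ov v)}$ with $\Pa(v\alp)$, both arguments decompose an $\alp$-prefix of $v\alp$ along the factorization of $v$ into elements of $\Pa(u\alp)$ and use Lemma~\ref{lem:alpha} to show that the residual piece (your overrun case) is itself an $\alp$-prefix of $u\alp$. The only difference is presentational --- the paper performs the decomposition in one step where you peel factors off inductively --- and part~2 is obtained in both cases by applying part~1 with the roles of $u$ and $v$ exchanged.
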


\begin{proof}
	Let $\Pa(w) = \oneset{u_0,\ldots,u_{m-1}}$ such that $u_0 <_p \cdots <_p u_{m-1}$ and
	$\Sa(w) = \oneset{\ov{v_0},\ldots,\ov{v_{n-1}}}$ such that $\ov{v_{n-1}} >_s \cdots >_s \ov{v_0}$.
	We may assume $u = u_s$ for some $0\leq s <m$.
	Recall that $\Pa(u\alp) = \oneset{u_0,\ldots,u_s}$.
	
	For the first statement,
	let $v = x_1\cdots x_\ell$ where $x_1,\ldots,x_\ell\in \oneset{u_0,\ldots, u_s}$.
	For $0\leq j \leq t$ we have $v_j\alp \leq v\alp$, hence,
	there is $1\leq i\leq \ell$ such that $v_j = x_1\cdots x_{i-1}y$
	where $y\leq_p x_{i}$ and $y\alp \leq_p x_{i}\cdots x_\ell\alp$.
	By Lemma~\ref{lem:alpha}, we see that $y\alp \leq_p x_i\alp \leq_p w$
	and hence $y\in \oneset{u_1,\ldots, u_s}$.
	Therefore, $v_j \in\oneset{u_1,\ldots, u_s}^*$.

	The second statement follows immediately by the symmetry between prefixes and complemented suffixes.
\end{proof}

Lemma~\ref{lem:theo} and Corollary~\ref{cor:sufone} in in Section~\ref{sec:main}
will describe the consequences for the iterated hairpin completion of a non-crossing word $w$,
if we find such a situation.

\subsection{\texorpdfstring{The $\alp$-Index}{The alpha-Index}}

The {\em $\alp$-index} of a word $x$ is the number of occurrences
of the factor $\alp$ in the word $x\alp$ except for the suffix.
Formally, we define a function $\ia\colon\Sig^*\to \N$ as
\begin{equation*}
	\ia(x) = \abs{\Pa(x\alp)}-1.
\end{equation*}

Recall that $\Pa(w) = \oneset{u_0,\ldots,u_{m-1}}$ such that $u_0 <_p \cdots <_p u_{m-1}$.
Note that, by this ordering, for all $0\leq i < m$ we have $\ia(u_i) = i$ and
if $x \in \Pa(w)$, then $x = u_{\ia(x)}$.
Symmetrically, if $\ov x \in \Sa(w)$, then $x = v_{\ia(x)}$.

Also note that for words $x,y$ with $\ia(x) > \ia(y)$ the
word $x$ cannot be a factor of $y$ as the positions of the factors $\alp$ cannot match.
Later we will consider the $\alp$-indices of words from $\aSa$
(Note that $x\in\aSa$ if and only if $\alp \leq_p x\alp$).
If $x\in\aSa$ and $y\in\Sig^*$, then $\ia(yx) = \ia(y) + \ia(x)$.
These observations lead to the following properties.

\begin{lemma} \label{lem:karl}
	Let $\Pa(w) = \oneset{u_0,\ldots,u_{m-1}}$ such that $u_0 <_p \cdots <_p u_{m-1}$,
	let $x\in\aSa$, and let $0 \leq j < m$.
	If $x$ is a suffix of $u_j$, then $u_j = u_{j-\ia(x)}x$.
	In particular, if $w\in\alp\Sig^*$ and $u_j \geq_s u_i $ for $0\leq i \leq j < m$, then
	$u_j = u_{j-i}u_i$.
\end{lemma}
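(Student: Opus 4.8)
The statement has two parts and I would attack them in order. For the first part, assume $x \in \aSa$ is a suffix of $u_j$. The key is the additivity of the $\alp$-index on concatenations of the form $yx$ with $x \in \aSa$: writing $u_j = yx$ for the appropriate prefix $y$, we get $\ia(u_j) = \ia(y) + \ia(x)$, i.e.\ $\ia(y) = j - \ia(x)$. It remains to show $y \in \Pa(w)$, for then $y = u_{\ia(y)} = u_{j-\ia(x)}$ by the indexing convention, giving $u_j = u_{j-\ia(x)}x$. To see $y \in \Pa(w)$: we have $y\alp \leq_p yx = u_j <_p u_{m-1}\alp \leq_p w$ (or $y\alp \leq_p u_j\alp \leq_p w$ if $j = m-1$; in fact $y\alp \leq_p u_j$ always holds once we check $|x| \ge k$, which follows because $\alp \leq_p x\alp$ forces $x$ to carry an occurrence of $\alp$ as a prefix of $x\alp$, but actually we only need $y\alp \leq_p u_j\alp$, and since $x \in \aSa$ we have $x\alp \geq_s \alp$... more carefully: $x \in \aSa$ means $\alp \leq_p x\alp$, so if $|x| \geq k$ then $\alp \leq_p x$ and $y\alp \leq_p yx = u_j \leq_p w$; if $|x| < k$ one argues $y\alp \leq_p u_j\alp \leq_p u_{m-1}\alp \leq_p w$ directly since $y <_p u_j$ as prefixes). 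Either way $y$ is an $\alp$-prefix of $w$.

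For the second part, assume additionally $w \in \alp\Sig^*$ and $u_j \geq_s u_i$ for $0 \leq i \leq j < m$. First I would verify that $u_i \in \aSa$: indeed $u_i \alp \leq_p w$ and $w \in \alp\Sig^*$ means $\alp \leq_p w$, hence $\alp \leq_p u_i\alp$, so $u_i \in \aSa$ by the remark that $x \in \aSa$ iff $\alp \leq_p x\alp$. Now $u_i$ is a suffix of $u_j$ and $u_i \in \aSa$, so the first part applies with $x = u_i$: we get $u_j = u_{j - \ia(u_i)} u_i = u_{j-i} u_i$, using $\ia(u_i) = i$.

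\textbf{Main obstacle.} The only delicate point is justifying $y \in \Pa(w)$ in the first part — specifically making sure $y\alp$ (not merely $y$) is a prefix of $w$, which requires either that $x$ contains a full copy of $\alp$ as a prefix (so the trailing $\alp$ of $y\alp$ is already present inside $u_j$) or a direct comparison $y\alp \leq_p u_j\alp \leq_p w$ when $x$ is too short. This hinges on the interplay between $x \in \aSa$ (which gives $\alp \leq_p x\alp$ but not necessarily $\alp \leq_p x$) and the length of $x$; I expect this case split to be the one place where care is needed, everything else being a routine bookkeeping of indices via the additivity $\ia(yx) = \ia(y) + \ia(x)$ and the convention $x \in \Pa(w) \Rightarrow x = u_{\ia(x)}$.
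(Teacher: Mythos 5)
Your proof follows essentially the same route as the paper: write $u_j = yx$, show $y\in\Pa(w)$, and then use additivity of the $\alp$-index together with the convention $y = u_{\ia(y)}$; the second part is likewise handled by observing $u_i\in\aSa$ (since $\alp$ and $u_i\alp$ are both prefixes of $w\in\alp\Sig^*$) and invoking the first part. The only criticism is the "main obstacle" you identify, which is no obstacle at all: the case split on $\abs{x}$ versus $k$ is unnecessary, and in the branch $\abs{x}<k$ your stated justification ("$y\alp \leq_p u_j\alp$ directly since $y<_p u_j$ as prefixes") is not valid as an inference --- a proper prefix $y$ of $u_j$ need not satisfy $y\alp \leq_p u_j\alp$. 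The uniform one-line argument, which you already have in hand, is that $x\in\aSa$ means $\alp \leq_p x\alp$, hence $y\alp \leq_p y\,x\alp = u_j\alp \leq_p w$, independently of the length of $x$; this is exactly how the paper argues, so your proof is correct once that branch is repaired in this way.
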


\begin{proof}
	Let $y$ such that $u_j = yx$.
	As $\alp \leq_p x\alp$, we have $y\alp \leq_p u_j\alp \leq_p w$,
	and hence, $y\in \Pa(w)$.
	As $\ia(y) = \ia(u_j) - \ia(x)$,
	it follows that $y = u_{j-\ia(x)}$.
\end{proof}

\begin{lemma} \label{lem:siegbert}
	Let $\Pa(w) = \oneset{u_0,\ldots,u_{m-1}}$ such that $u_0 <_p \cdots <_p u_{m-1}$,
	let $1\leq i < m$, and let $x$ be a word with $\ia(x) \leq i$.
	\begin{equation*}
		x\in\oneset{u_1,\ldots,u_i}^* \quad\iff\quad
		x\in\oneset{u_1,\ldots,u_{\ia(x)}}^*.
	\end{equation*}
\end{lemma}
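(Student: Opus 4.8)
The plan is to prove the nontrivial implication ($\Rightarrow$) since the reverse implication is immediate: if $x\in\oneset{u_1,\ldots,u_{\ia(x)}}^*$ then certainly $x\in\oneset{u_1,\ldots,u_i}^*$ because $\ia(x)\le i$. So assume $x\in\oneset{u_1,\ldots,u_i}^*$; I want to conclude that in fact no factor with $\alp$-index larger than $\ia(x)$ is needed, i.e.\ $x$ already lies in $\oneset{u_1,\ldots,u_{\ia(x)}}^*$. First I would dispose of the trivial case $x=\e$ (then $\ia(x)=0$ and the empty product works). Otherwise write $x = u_{i_1}u_{i_2}\cdots u_{i_\ell}$ with each $i_t\in\oneset{1,\ldots,i}$ and $\ell\ge 1$. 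The key arithmetic observation is that each $u_{i_t}\in\aSa$ by Lemma~\ref{lem:alpha}(1) (applied with $w$ replaced by $u_{i_t}\alp$, or just noting $\alp\le_p u_{i_t}\alp$ directly since $u_{i_t}$ is an $\alp$-prefix of $w$), hence the $\alp$-index is additive along the product: $\ia(x) = \ia(u_{i_1}) + \cdots + \ia(u_{i_\ell}) = i_1 + \cdots + i_\ell$, using $\ia(u_{i_t}) = i_t$ from the ordering convention.

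Now the point is purely about the indices: each summand $i_t$ satisfies $1\le i_t$, and their sum is $\ia(x)$, so in particular every $i_t\le \ia(x)$. Therefore each factor $u_{i_t}$ already belongs to $\oneset{u_1,\ldots,u_{\ia(x)}}$, and so $x = u_{i_1}\cdots u_{i_\ell}\in\oneset{u_1,\ldots,u_{\ia(x)}}^*$, which is exactly what we wanted. The hypothesis $1\le i < m$ and $\ia(x)\le i$ is needed only to guarantee that the factors $u_1,\ldots,u_i$ are genuinely available (i.e.\ that $\oneset{u_1,\ldots,u_i}$ is a well-defined nonempty subset of $\Pa(w)$) and that $\ia(x)$ is in range, so that $\oneset{u_1,\ldots,u_{\ia(x)}}$ makes sense as a subset of $\Pa(w)$.

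The main thing to be careful about — the only real subtlety — is the additivity of $\ia$ along a concatenation and the identification $\ia(u_j)=j$. Both are recorded in the text just before the lemma: the remark that $\ia(yx)=\ia(y)+\ia(x)$ when $x\in\aSa$, and the remark that $\ia(u_i)=i$ under the chosen ordering of $\Pa(w)$. I would cite these explicitly, noting that an easy induction on $\ell$ extends additivity from two factors to $\ell$ factors (each proper suffix of the product still lies in $\aSa$, so the inductive step applies). Once additivity is in hand the rest is the trivial observation that a sum of positive integers dominates each of its terms; there is no real obstacle beyond making sure the empty-product edge case and the index bookkeeping are handled cleanly.
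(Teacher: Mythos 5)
Your proof is correct and follows essentially the same route as the paper: the reverse direction is immediate, and for the forward direction you decompose $x$ into factors from $\oneset{u_1,\ldots,u_i}$ and use additivity of $\ia$ (so that $\ia(x)\ge \ia(u_{i_t})=i_t$ for every factor) to conclude each factor already lies in $\oneset{u_1,\ldots,u_{\ia(x)}}$. The paper states the key inequality $\ia(x)\ge\ia(y_j)$ without spelling out the additivity; you merely make that step explicit, so there is no substantive difference.
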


\begin{proof}
	The implication from right to left is plain.

	Conversely,
	let $x = y_1\cdots y_\ell$ with $y_1,\ldots,y_\ell\in \oneset{u_1,\ldots,u_i}$.
	As $\ia(x) \geq \ia(y_j)$ for all $1\leq j\leq \ell$,
	we see that, actually, $y_1,\ldots,y_\ell\in \oneset{u_1,\ldots,u_{\ia(x)}}$.
	Hence, $x\in \oneset{u_1,\ldots,u_{\ia(x)}}^*$, as desired.
\end{proof}

\section{Iterated Hairpin Completion of Non-crossing Words}
\label{sec:main}

Now we are prepared to prove a necessary and sufficient condition for the
regularity of $\Has(w)$, where $w\in\alp\Sig^*\cap\Sig^*\ov\alp$
is a non-crossing $(m,n)$-word
with $\Pa(w) = \oneset{u_0,\ldots,u_{m-1}}$ and $\Sa(w) = \oneset{\ov{v_0},\ldots,\ov{v_{n-1}}}$
which are ordered as in the previous section.
(Keep in mind that $u_0 = \ov{v_0} = \e$.)
By a result from \cite{KariKopeckiSeki}
it is enough to consider the case where $m,n \geq 2$
and in this case, by Corollary~\ref{cor:1st_step_mn_ge2},
\begin{equation*}
	\Ha(w) = \oneset w\cup w\oneset{\ov{u_1}, \ldots, \ov{u_{m-1}}}
		\cup \oneset{v_1, \ldots, v_{n-1}}w.
\end{equation*}

\begin{theorem}[See \cite{KariKopeckiSeki}]
	If $w\in\aSSa$ is a non-crossing $(m,n)$-word with $m=1$ or $n=1	$,
	then $\Has(w)$ is regular.
\end{theorem}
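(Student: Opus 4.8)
The plan is to establish the claim by a case analysis on which of $m,n$ equals $1$, using the structure of $\Ha(w)$ together with the non-crossing property to show that the iteration stabilises quickly. First I would treat the degenerate case $m=n=1$, where $\Pa(w)=\{\e\}$ and $\Sa(w)=\{\ov\alp\}$ restricted suitably; here $w$ admits no proper hairpin completion at all (or only a bounded one), so $\Has(w)$ is finite, hence regular. The substantive case is $m\ge 2$, $n=1$ (the case $m=1$, $n\ge 2$ follows by the prefix/suffix symmetry $\Sa(w)=\ov{\Pa(\ov w)}$ noted in Section~\ref{sec:non-crossing}). When $n=1$ we have $\Sa(w)=\{\e\}$, so no left hairpin completion is ever possible: every derivation $w\to_{\cH_\alp}^*z$ consists solely of right hairpin completions. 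Thus each $z\in\Has(w)$ has the form $z = w\ov{u^{(1)}}\ov{u^{(2)}}\cdots$, and I would describe $\Has(w)$ as $w$ followed by a language built from the $\alp$-prefixes of the successive words, showing this language is generated by a finite-state process.

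The key technical step is to understand how $\Pa$ evolves under a right hairpin completion step. If $w\to_\RH w\ov u$ with $u\in\Pa(w)$, then every $\alp$-prefix of $w\ov u$ is either an $\alp$-prefix of $w$ (when the occurrence of $\alp$ lies inside the original $w$, except possibly for an overlap with the appended $\ov u$), or it is of the form $w\ov u'$ for some $\alp$-prefix $u'$ that is "visible" in the complemented suffix $\ov u$ — and by Lemma~\ref{lem:alpha} and the $\alp$-index bookkeeping of Lemma~\ref{lem:siegbert}, the latter prefixes are governed only by $\Pa(u\alp)=\{u_0,\dots,u_s\}$, which is a subset of the original $\Pa(w)$. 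So no genuinely new $\alp$-prefixes can appear beyond the fixed finite set $\{u_0,\dots,u_{m-1}\}$ (up to the shift by the already-built part of $z$). Consequently, after the first right-completion step the set of "available primers for further extension" is a subset of the original finite set and only ever shrinks or stays the same; the whole iteration is therefore controlled by a finite automaton whose states track which $u_i$'s are still usable. From this one reads off that $\Has(w)$ is regular: concretely, $\Has(w) = w\cdot R$ for a regular language $R\sse\ov\alp\Sig^*$ (or, after accounting for the overlap cases permitted when the appended block is short, a finite union of such products), and regularity is preserved under concatenation with the single word $w$.

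The main obstacle I expect is the careful handling of \emph{overlaps}: when $u$ is short (so that $w\ov u$ has length less than $\abs w+2k$, which Lemma~\ref{lem:length_nonoverlap_apref_casuf} does not rule out here since that lemma needs $n\ge2$), the appended $\ov u$ can overlap the suffix $\ov\alp$ of $w$, and a new $\alp$-prefix of $w\ov u$ can straddle the seam between $w$ and $\ov u$. I would isolate finitely many such "boundary" prefixes by a length argument — any straddling $\alp$-occurrence sits within $k$ letters of the seam, so there are only boundedly many — and absorb them into the finite state space. A secondary, purely bookkeeping, obstacle is keeping the indices straight when iterating: after building $z = w\ov{u^{(1)}}\cdots\ov{u^{(r)}}$ one must check that the set of $\alp$-prefixes of the form $z\ov{u'}$ that extend $z$ depends only on the current "active set" of indices and not on the full history; this is exactly where Lemmas~\ref{lem:karl} and~\ref{lem:siegbert} do the work, reducing membership of a concatenation $x_1\cdots x_\ell$ in $\{u_1,\dots,u_i\}^*$ to the $\alp$-index condition, which is monotone and hence stabilises. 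Once these two points are nailed down, assembling the finite automaton and concluding regularity is routine.
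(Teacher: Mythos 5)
Your reduction of the case $m=1,n\ge 2$ to $m\ge 2,n=1$ by the symmetry $\Sa(w)=\ov{\Pa(\ov w)}$, and the finiteness argument for $m=n=1$, are fine, but the central claim of your main case is false: it is not true that $n=1$ forces every derivation to consist solely of right hairpin completions. That holds only for the \emph{first} step. As soon as one right completion $w\ra_\RH w\ov{u_j}$ with $j\ge 1$ is performed, the new word ends with $\ov\alp\,\ov{u_j}=\ov{u_j\alp}$, so $\Sa(w\ov{u_j})=\ov{\Pa(u_j\alp)}=\oneset{\ov{u_0},\ldots,\ov{u_j}}$ has $j+1\ge 2$ elements, and left completions prepending any $u_i$ with $i\le j$ become available; for instance $w\ra_\RH w\ov{u_1}\ra_\LH u_1w\ov{u_1}$ (concretely, take $\alp=ab$ and $w=\alp\alp c\,\ov\alp$ with $c$ a fresh letter: then $m=2$, $n=1$, and $w\ra_\RH w\ov\alp\ra_\LH \alp w\ov\alp$). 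This interplay is exactly the mechanism exploited in Lemma~\ref{lem:theo} and in the sets $R_i$ of the proof of Theorem~\ref{thm:sufficient}. Consequently your claimed normal form $\Has(w)=w\cdot R$ with $R\sse\ov\alp\Sig^*$ (or a finite union of such products) cannot hold: $u_1w\ov{u_1}\in\Has(w)$ does not begin with $w$, since $w$ occurs exactly once in every word of $\Has(w)$ (Corollary~\ref{cor:nonx_initial_once}), and in fact words of $\Has(w)$ carry unboundedly long material to the left of that unique occurrence, e.g.\ all of $u_1^{\ell}w\ov{u_1}^{\ell}$.

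Because of this, the automaton you sketch, whose states only track which $u_i$ remain usable for further rightward extension, does not accept $\Has(w)$, and the actual content of the theorem is missing from your argument: one has to show that, although left extensions do occur, they are controlled by the largest $\alp$-index of a block already appended on the right (here $\ov{\Sa(w)}^*$ is trivial, so Lemma~\ref{lem:otto} gives $\Has(w)\sse\Pa(w)^*\,w\,\ov{\Pa(w)}^*$, and one then proves equality with a finite union of languages of the shape $\oneset{u_1,\ldots,u_i}^*\,w\,\oneset{\ov{u_1},\ldots,\ov{u_i}}^*\ov{u_i}\oneset{\ov{u_1},\ldots,\ov{u_i}}^*$, in the spirit of the proof of Theorem~\ref{thm:sufficient}, plus the bookkeeping for the first step where the overlap issue you mention is relevant because Lemma~\ref{lem:length_nonoverlap_apref_casuf} indeed requires $n\ge 2$). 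Note also that the paper itself does not prove this statement but imports it from \cite{KariKopeckiSeki}; as written, your proposal does not establish it.
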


The next two lemmas lead to a first sufficient condition
for the regularity of $\Has(w)$, see Corollary~\ref{cor:sufone}.

\begin{lemma}\label{lem:otto}
	For non-crossing $w\in \aSSa$,
	\begin{equation*}
		\Has(w) \sse \left( \Pa(w) \cup \ov{\Sa(w)}\right)^* w
			\left( \ov{\Pa(w)} \cup \Sa(w) \right)^*.
	\end{equation*}
\end{lemma}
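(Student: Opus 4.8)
The plan is to proceed by induction on the number of hairpin-completion steps needed to derive a word $z$ from $w$. The base case $z = w$ is trivial, since $w$ itself lies in $\left(\Pa(w)\cup\ov{\Sa(w)}\right)^* w\left(\ov{\Pa(w)}\cup\Sa(w)\right)^*$ with empty prefix and suffix (recall $u_0 = \ov{v_0} = \e$, so $\e\in\Pa(w)$ and $\e\in\ov{\Sa(w)}$). For the inductive step, suppose $z = x w y$ with $x\in\left(\Pa(w)\cup\ov{\Sa(w)}\right)^*$ and $y\in\left(\ov{\Pa(w)}\cup\Sa(w)\right)^*$, and let $z\to_{\cH_\alp} z'$. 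By symmetry (using $\Sa(w) = \ov{\Pa(\ov w)}$) it suffices to treat a right hairpin completion, so $z' = z\ov t$ where $t\in\Pa(z)$, i.e. $t\alp\leq_p z$. I need to show $\ov t\in\left(\ov{\Pa(w)}\cup\Sa(w)\right)^*$, which is equivalent to showing $t\in\left(\Pa(w)\cup\ov{\Sa(w)}\right)^*$; then $z' = x w (y\ov t)$ has the desired form because concatenating $y$ with $\ov t$ keeps both in the monoid $\left(\ov{\Pa(w)}\cup\Sa(w)\right)^*$ after transferring via complementation.

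The heart of the matter is therefore: \emph{every $\alp$-prefix $t$ of $z = xwy$ decomposes as a product of elements of $\Pa(w)\cup\ov{\Sa(w)}$.} Here I would use the factorization $x = s_p\cdots s_1$ with each $s_i\in\Pa(w)\cup\ov{\Sa(w)}$ (from the inductive hypothesis on $x$) together with Corollary~\ref{cor:nonx_initial_once}, which says $w$ occurs exactly once in $z$, pinning down where the factor $w$ sits. Since $t\alp\leq_p z = s_p\cdots s_1 w y$, the prefix $t\alp$ either ends within the $x$-part or reaches into the distinguished copy of $w$. If $t\alp$ ends inside $x = s_p\cdots s_1$, say the boundary falls inside $s_i$, I would write $t\alp = s_p\cdots s_{i+1}\, r$ with $r\alp \leq_p s_i\alp$ and argue, using Lemma~\ref{lem:alpha} (to guarantee $\alp\leq_p r\alp$, so the occurrence of $\alp$ is genuine) and the fact that each $s_i$ is itself an $\alp$-prefix or complemented $\ov\alp$-suffix of $w$, that $r$ is again an $\alp$-prefix of $w$ (or of $\ov w$), hence lies in $\Pa(w)\cup\ov{\Sa(w)}$. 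If instead $t\alp$ extends into the copy of $w$, then $t = x\, t'$ with $t'\alp\leq_p w$, so $t'\in\Pa(w)$ directly, and $t = s_p\cdots s_1 t'$ is again a product of the required generators.

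The main obstacle I anticipate is the bookkeeping at the seams between consecutive blocks $s_i$: a single occurrence of $\alp$ in $z$ may straddle the boundary between $s_{i+1}$ and $s_i$, and one must be careful that such an occurrence, when it is the $\alp$ marking the end of the $\alp$-prefix $t$, still yields a clean decomposition. This is exactly the kind of overlap issue that the non-crossing hypothesis is designed to control: by Proposition~\ref{prop:characterization} and Corollary~\ref{cor:nonx_initial_once}, the occurrences of $\alp$ and $\ov\alp$ in $z$ are well-separated relative to the unique copy of $w$, and Lemma~\ref{lem:alpha}(2) ensures that any product $s_\ell\cdots s_1\alp$ still has $\alp$ as a prefix, so the "last $\alp$" of $t\alp$ cannot disappear into a boundary. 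I would push the straddling case into the block $s_i$ that contains the \emph{right} end of that $\alp$-occurrence and use the prefix-closure observation $r\in\Pa(s_i\alp)\subseteq$ (prefixes of $w$ or $\ov w$) noted right after the definition of $\Pa$. Once this localization is done, the induction closes, and the symmetric statement for left hairpin completions follows by applying the argument to $\ov z$ and invoking $\Sa(w) = \ov{\Pa(\ov w)}$.
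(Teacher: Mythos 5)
Your overall frame is the same as the paper's: induct on the length of the derivation and reduce everything to the claim that every $\alp$-prefix of an intermediate word $z=s_p\cdots s_1wy$ (with $s_i\in\Pa(w)\cup\ov{\Sa(w)}$) again factors over $\Pa(w)\cup\ov{\Sa(w)}$ --- a containment the paper simply asserts. The inductive skeleton, the transfer to left completions via $\Sa(w)=\ov{\Pa(\ov w)}$, and your non-straddling case are fine. The gap is in the step you yourself call the main obstacle: your recipe for a straddling occurrence of $\alp$ picks the wrong block. If the trailing $\alp$ of $t\alp\leq_p z$ starts inside $s_j$ and ends inside $s_i$ with $i<j$ (or inside $w$), then $t$ itself ends inside $s_j$, so the only available factorization is $t=s_p\cdots s_{j+1}r$ with $r<_p s_j$; no piece of $t$ is a prefix of $s_i$, so ``pushing into the block containing the right end'' and invoking $r\in\Pa(s_i\alp)$ does not decompose $t$. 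The correct move uses the block containing the \emph{left} end of that occurrence, i.e.\ the block where $t$ ends: by Lemma~\ref{lem:alpha}, $s_{j-1}\cdots s_1w$ begins with $\alp$, hence $s_j\alp\leq_p s_js_{j-1}\cdots s_1w$, and since $\abs{r\alp}\leq\abs{s_j\alp}$ and both are prefixes of the same word, $r\alp\leq_p s_j\alp$; then $r\in\Pa(w)$ if $s_j\in\Pa(w)$ and $r\in\ov{\Sa(w)}$ if $s_j\in\ov{\Sa(w)}$, so $t=s_p\cdots s_{j+1}r$ factors over the generators. The same argument also covers a straddle at the seam between $s_1$ and $w$, which your case ``then $t=xt'$ with $t'\alp\leq_p w$'' silently excludes.

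A second, smaller omission: you never rule out that $t\alp$ reaches past the copy of $w$ into $y$, and this is exactly where the non-crossing hypothesis must be invoked explicitly rather than gestured at. Since $z\in\Has(w)$ is non-crossing by Proposition~\ref{prop:nonx_HC}, the leftmost occurrence of $\ov\alp$ in $z$ ends no later than the end of the $w$-block (because $w$ ends with $\ov\alp$), and as $\alp\neq\ov\alp$ the rightmost occurrence of $\alp$ starts, hence ends, strictly before it; therefore every $\alp$-prefix $t$ of $z$ satisfies $t\alp\leq_p s_p\cdots s_1w$, and the block-by-block analysis above applies. With these two repairs your induction closes and coincides with the argument the paper leaves implicit.
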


\begin{proof}
	For every word $z\in\Has(w)$, by definition,
	we find a series of words $w = w_0,w_1,\ldots,w_\ell = z$ for
	some $\ell\geq 0$ such that
	\begin{equation*}
		w_0 \ra_\cH w_1 \ra_\cH \cdots \ra_\cH w_\ell.
	\end{equation*}
	We prove $z = w_\ell\in\bigl( \Pa(w) \cup \ov{\Sa(w)}\bigr)^* w
	\bigl( \ov{\Pa(w)} \cup \Sa(w) \bigr)^*$ by induction on $\ell$.
	For $\ell = 0$ this is plain.
	Now we assume that any word which can be derived from $w$
	by at most $\ell-1$ hairpin completions is in this set.
	By induction hypothesis,
	\begin{equation*}
		w_{\ell-1}= x_s\cdots x_1 w \ov{y_1}\cdots \ov{y_t}
	\end{equation*}
	for some $s,t\geq 0$ and $x_1,\ldots,x_s,y_1,\ldots,y_s \in \Pa(w) \cup \ov{\Sa(w)}$.
	Hence, $\Pa(w_{\ell-1}) \sse \bigl( \Pa(w) \cup \ov{\Sa(w)}\bigr)^*$
	and $\Sa(w_{\ell-1}) \sse \bigl( \ov{\Pa(w)} \cup \Sa(w) \bigr)^*$.
	This proves our claim.
\end{proof}

\begin{lemma}\label{lem:theo}
	Let $w\in\aSSa$ be non-crossing.
	If for some $u\in\Pa(w)$ and $\ov v\in\Sa(w)$
	we have $v\in\Pa(u\alp)^*$ and $u\in\ov{\Sa(\ov\alp\ov v)}^*$, then
	\begin{equation*}
		\Pa(u\alp)^* w \ov{\Pa(u\alp)}^* = 
		\ov{\Sa(\ov\alp\ov v)}^* w \Sa(\ov\alp\ov v)^*
		\sse \Has(w).
	\end{equation*}
\end{lemma}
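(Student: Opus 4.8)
The plan is to establish the two assertions separately: first the set equality
$\Pa(u\alp)^* w \ov{\Pa(u\alp)}^* = \ov{\Sa(\ov\alp\ov v)}^* w \Sa(\ov\alp\ov v)^*$,
then the inclusion of this common set into $\Has(w)$. For the equality, I would invoke Lemma~\ref{lem:inclusions}(2): the hypotheses $v\in\Pa(u\alp)^*$ and $u\in\ov{\Sa(\ov\alp\ov v)}^*$ are exactly its assumptions, so it yields $\Pa(u\alp)^* = \ov{\Sa(\ov\alp\ov v)}^*$ directly. Taking complements, $\ov{\Pa(u\alp)}^* = \Sa(\ov\alp\ov v)^*$, and the equality of the two two-sided products follows immediately by substitution. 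So this half is essentially a one-line consequence of the earlier lemma; the real content is the inclusion.

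For the inclusion $\Pa(u\alp)^* w \ov{\Pa(u\alp)}^* \sse \Has(w)$, I would argue that every word of the form $x_s\cdots x_1\, w\, \ov{y_1}\cdots \ov{y_t}$ with $x_i, y_j \in \Pa(u\alp) = \oneset{u_0,\ldots,u_s}$ (writing $u = u_s$) is reachable from $w$ by iterated hairpin completion. The natural strategy is induction on $s+t$, the total number of appended blocks, peeling off the \emph{outermost} block. Suppose we want to append $\ov{y_t}$ on the right as the last step, i.e. we want $z' \to_\RH z$ where $z = z'\ov{y_t}$ and $z' = x_s\cdots x_1 w\ov{y_1}\cdots\ov{y_{t-1}}$. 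A right hairpin completion of $z'$ with respect to $\alp$ produces $z'\ov u'$ for $u' \in \Pa(z')$, provided $\ov\alp$ is a suffix of $z'$ and the lengths permit (no forbidden overlap). Since $y_t \in \Pa(u\alp)$, and by the remark before Lemma~\ref{lem:alpha} each element of $\Pa(u\alp)$ is an $\alp$-prefix of every longer one, I need $y_t \in \Pa(z')$ — which holds because $w\le_p z'$ in the appropriate sense, more precisely because the prefix $x_s\cdots x_1 w$ of $z'$ already has $u_s\alp = u\alp$ as a factor right after the $x$-blocks (using Lemma~\ref{lem:alpha} to guarantee $\alp\le_p x_s\cdots x_1\alp$, hence the $\alp$ following the $x$-part is genuinely present) so that all of $u_0,\ldots,u_s$ are $\alp$-prefixes of $z'$. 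Symmetrically for appending an $x_i$ on the left via $\to_\LH$.

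The main obstacle I anticipate is the length/overlap side condition in the definition of hairpin completion: $w\to_\RH w\ov u$ requires $|u| + 2k \le$ (something), or more precisely requires that the suffix $\ov\alp$ not overlap the factor $\alp$ being used. In the middle of a long word $z'$ this is usually automatic because $z'$ is long, but at the very first step ($z' = w$) it is exactly the content of Lemma~\ref{lem:length_nonoverlap_apref_casuf}, which needs $n\ge 2$. One must therefore handle the base cases with care: if $n = 1$ then $\Sa(w) = \oneset\e$, so $\ov v = \e$, and the hypothesis forces $u = \e$ too (as $\ov{\Sa(\ov\alp)} = \oneset\e$), making both sides equal to $\oneset w$ and the statement trivial; similarly if $m=1$. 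So we may assume $m,n\ge 2$, invoke Corollary~\ref{cor:1st_step_mn_ge2} for the first step, and thereafter the words only get longer so the overlap condition stays satisfied — but I would want to verify that appending a block never shrinks the relevant gap, which follows since each appended block lies in $\aSa$ and hairpin completion only extends the word. A secondary subtlety is ensuring the \emph{induction is well-founded} in the order blocks are peeled: I would always remove an outermost block on whichever side has one, and the argument that the remaining word is still of the required form is immediate from the shape of the product. With these base cases dispatched and the overlap condition monitored, the inductive step is the routine verification sketched above, and combining it with the set equality from Lemma~\ref{lem:inclusions}(2) gives the full claim.
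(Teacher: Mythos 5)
Your treatment of the set equality via Lemma~\ref{lem:inclusions}(2) and of the degenerate cases ($m=1$ or $n=1$, overlap condition via Lemma~\ref{lem:length_nonoverlap_apref_casuf}) is fine, but the core of your induction has a genuine gap. You justify appending $\ov{y_t}$ as the last step by claiming that all of $u_0,\ldots,u_s$ are $\alp$-prefixes of $z'=x_s\cdots x_1w\ov{y_1}\cdots\ov{y_{t-1}}$ because $u\alp$ occurs as a factor right after the $x$-blocks. That is not what $\alp$-prefix means: one needs $y_t\alp\le_p z'$, a condition on the \emph{prefix} of $z'$, and prepending nonempty blocks destroys it in general; an internal occurrence of $y_t\alp$ only gives the $\alp$-prefix $x_s\cdots x_1y_t$ of $z'$, whose right hairpin completion appends $\ov{y_t}\,\ov{x_1}\cdots\ov{x_s}$, not $\ov{y_t}$. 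Concretely, take $k=1$, $\alp=a$, $w=abada\,\ov{abada}$: here $u_1=ab$ and $u_2=abad$ lie in $\Pa(w)$ and $\ov{u_1},\ov{u_2}\in\Sa(w)$, yet $u_2\alp=abada$ is not a prefix of $u_1w=ababada\cdots$, so $u_1w\ov{u_2}$ is \emph{not} a right hairpin completion of $u_1w$; it is only reachable by first forming $w\ov{u_2}$ and then prepending $u_1$. Hence "peel any outermost block" fails as justified; the order of the steps matters. The strategy can be repaired by always peeling the shorter of the two outermost blocks (if $\abs{y_t}\le\abs{x_s}$ then $y_t\alp\le_p x_s\alp\le_p z'$; if $\abs{x_s}\le\abs{y_t}$ then $\ov\alp\ov{x_s}$ is a suffix of $\ov\alp\ov{y_t}$ and hence of the remaining word), but you neither state nor prove such a rule, whereas the paper avoids the issue by prescribing the derivation order explicitly: append the right blocks up to the last one of maximal index $\ell$, then prepend the left blocks up to the last one of index $\ell$ (legitimate because the word now ends in $\ov\alp\ov{u_\ell}$), and recurse at index $\ell-1$ around the new core $w'$.

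The second gap is the general case $u\neq v$, which you dispose of with "symmetrically". When the word still ends in $w$ (no right blocks, e.g.\ the target is $x_s\cdots x_1w$), prepending $x_s$ by left hairpin completion requires $\ov\alp\ov{x_s}$ to be a suffix of $w$, \ie $\ov{x_s}\in\Sa(w)$; an arbitrary element of $\Pa(u\alp)$ need not satisfy this, and then that block simply cannot be added as a single left step --- the left part must be re-factorized into complemented $\ov\alp$-suffixes, which an induction on your given block decomposition never does. The paper handles exactly this by a reduction you are missing: choosing $\ell$ minimal with $v\in\oneset{u_0,\ldots,u_\ell}^*$, it proves $u_\ell=v_\ell$ (using Lemma~\ref{lem:siegbert}) and $\Pa(u\alp)^*=\ov{\Sa(\ov\alp\ov v)}^*=\oneset{u_0,\ldots,u_\ell}^*$, so every generator is simultaneously an $\alp$-prefix and a complemented $\ov\alp$-suffix of $w$, and only then does the derivation argument apply. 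Without this reduction (or an equivalent re-decomposition), your proof of the inclusion does not go through beyond the special case $u=v$.
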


\begin{proof}
	Let $\Pa(w) = \oneset{u_0,\ldots,u_{m-1}}$ such that $u_0 <_p \cdots <_p u_{m-1}$ and
	$\Sa(w) = \oneset{\ov{v_0},\ldots,\ov{v_{n-1}}}$ such that $\ov{v_{n-1}} >_s \cdots >_s \ov{v_0}$.

	First we prove that for the special case when $u = v$,
	$\Pa(u\alp)^* w \ov{\Pa(u\alp)}^* \sse \Has(w)$.
	Afterwards, we will use this fact in order to prove our claim.
	
	Suppose $u = v\in\Pa(w)\cap\ov{\Sa(w)}$ and let $\ell = \ia(u)$.
	This implies $u_i = v_i$ for all $0\le i\le \ell$.
	Let
	\begin{equation*}
		z = x_s\cdots x_1 w \ov{y_1}\cdots \ov{y_t}
	\end{equation*}
	where $s,t\geq 0$ and $x_1,\ldots,x_s,y_1,\ldots, y_t\in \oneset{u_0,\ldots,u_\ell} = \Pa(u\alp)$.
	Let $s'$ and $t'$ be maximal such that $\ia(x_{s'}) = \ell$ and $\ia(y_{t'}) = \ell$;
	or $0$ if no such index exists.
	We let $w' = x_{s'}\cdots x_1 w \ov{y_1}\cdots\ov{y_{t'}}$.
	Note that $\ov\alp\ov{u_\ell}$ is a suffix of $w'$ and hence
	\begin{equation*}
		w \ra_\RH^* w \ov{y_1}\cdots\ov{y_{t'}} \ra_\LH^*
		x_{s'}\cdots x_1 w \ov{y_1}\cdots\ov{y_{t'}} = w'.
	\end{equation*}

	We have $w'\in\Has(w)$ and
	$z \in \oneset{u_0,\ldots,u_{\ell-1}}^* w'\oneset{\ov{u_0},\ldots,\ov{u_{\ell-1}}}^*$.
	We may continue inductively and conclude $z\in\Has(w)$.

	Now let $v\in\Pa(u\alp)^*$ and $u\in\ov{\Sa(\ov\alp\ov v)}^*$ as in the claim and
	let $\ell\leq \ia(u)$ be the minimal index such that $v \in \oneset{u_0,\ldots,u_\ell}^*$.
	Note that
	\begin{equation*}
		\Pa(u\alp)^* = \ov{\Sa(\ov\alp\ov v)}^* = \oneset{u_0,\ldots,u_\ell}^*,
	\end{equation*}
	by Lemma~\ref{lem:inclusions}.
	The minimality of $\ell$ yields $u_\ell \notin\oneset{u_0,\ldots,u_{\ell-1}}^*$
	and $u_\ell$ is a factor of $v$, which means that $\ell \leq \ia(v) < n$.
	We claim that $u_\ell = v_\ell$.
	Indeed, if we had $u_\ell \neq v_\ell$, then
	$u_\ell \in \oneset{v_0,\ldots, v_{\ell-1}}^* \sse \oneset{u_0,\ldots,u_{\ell-1}}^*$
	(see Lemma~\ref{lem:siegbert}).

	As $u_\ell = v_\ell$, our observations made for the special case apply
	and we may conclude
	\begin{equation*}
		\Pa(u_\ell\alp)^* w \ov{\Pa(u_\ell\alp)}^* = 
		\Pa(u\alp)^* w \ov{\Pa(u\alp)}^* = 
		\ov{\Sa(\ov\alp\ov v)}^* w \Sa(\ov\alp\ov v)^*
		\sse \Has(w).
	\end{equation*}
\end{proof}

Suppose that the longest $\alp$-prefix $u_{m-1}$ belongs to $\ov{\Sa(w)}^*$ and the longest
$\ov\alp$-suffix $\ov{v_{n-1}}$ belongs to $\ov{\Pa(w)}^*$.
By Lemma~\ref{lem:inclusions}, we see that
$\Pa(w)^*= \ov{\Sa(w)}^*$ and,
by Lemmas~\ref{lem:otto} and~\ref{lem:theo}, we infer
$\Has(w) = \Pa(w)^* w \ov{\Pa(w)}^*$.
(Note that $\Pa(w) = \Pa(u_{m-1}\alp)$.)

\begin{corollary}\label{cor:sufone}
	Let $w\in\aSSa$ be non-crossing, let $u_{m-1}$ be the longest $\alp$-prefix of $w$,
	and let $v_{n-1}$ be the longest $\ov\alp$-suffix of $w$.
	If $u_{m-1} \in \ov{\Sa(w)}^*$ and $v_{n-1}\in\Pa(w)^*$, then $\Has(w)$ is regular.
\end{corollary}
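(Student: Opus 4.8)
The plan is to combine the two preceding results, Lemma~\ref{lem:otto} and Lemma~\ref{lem:theo}, together with the structural observation that the hypotheses force $\Pa(w)^* = \ov{\Sa(w)}^*$. First I would unwind the hypotheses: since $u_{m-1}$ is the longest $\alp$-prefix, we have $\Pa(w) = \Pa(u_{m-1}\alp) = \oneset{u_0,\ldots,u_{m-1}}$, and likewise $\Sa(w) = \Sa(\ov\alp\ov{v_{n-1}})$. The assumption $u_{m-1}\in\ov{\Sa(w)}^*$ says exactly that $v := v_{n-1}$ and $u := u_{m-1}$ satisfy the first branch of Lemma~\ref{lem:inclusions} once we also check $v_{n-1}\in\Pa(w)^* = \Pa(u_{m-1}\alp)^*$, which is the second hypothesis. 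Applying Lemma~\ref{lem:inclusions}(2) (with the roles of prefix and complemented suffix, noting $\Sa(w) = \ov{\Pa(\ov w)}$) then yields $\Pa(w)^* = \ov{\Sa(w)}^*$, and in particular $\ov{\Sa(w)}\sse\Pa(w)^*$ and $\Sa(w)\sse\ov{\Pa(w)}^*$.

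Next I would feed this into Lemma~\ref{lem:otto}, which gives
\[
	\Has(w)\sse\bigl(\Pa(w)\cup\ov{\Sa(w)}\bigr)^* w\bigl(\ov{\Pa(w)}\cup\Sa(w)\bigr)^*.
\]
Because $\ov{\Sa(w)}\sse\Pa(w)^*$ and $\Sa(w)\sse\ov{\Pa(w)}^*$, the right-hand side collapses to $\Pa(w)^* w\,\ov{\Pa(w)}^*$, so $\Has(w)\sse\Pa(w)^* w\,\ov{\Pa(w)}^*$. For the reverse inclusion I would invoke Lemma~\ref{lem:theo} with $u = u_{m-1}$ and $\ov v = \ov{v_{n-1}}$: the hypotheses of that lemma are precisely $v\in\Pa(u\alp)^*$ (our second assumption, since $\Pa(u_{m-1}\alp)=\Pa(w)$) and $u\in\ov{\Sa(\ov\alp\ov v)}^*$ (our first assumption, since $\ov{\Sa(\ov\alp\ov{v_{n-1}})}=\ov{\Sa(w)}$). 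Lemma~\ref{lem:theo} then delivers $\Pa(u\alp)^* w\,\ov{\Pa(u\alp)}^* = \Pa(w)^* w\,\ov{\Pa(w)}^*\sse\Has(w)$. Combining the two inclusions gives the exact identity $\Has(w) = \Pa(w)^* w\,\ov{\Pa(w)}^*$.

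Finally I would observe that $\Pa(w)$ is a finite set (it has exactly $m$ elements), so $\Pa(w)^*$ and $\ov{\Pa(w)}^*$ are regular languages, and hence $\Pa(w)^* w\,\ov{\Pa(w)}^*$ is a product of three regular languages and therefore regular; this proves $\Has(w)$ is regular. I expect essentially no obstacle here beyond careful bookkeeping: the real content has already been established in Lemmas~\ref{lem:otto}, \ref{lem:theo}, and~\ref{lem:inclusions}, and this corollary is the clean packaging of those three. The one point requiring a moment of care is matching the hypotheses of Lemma~\ref{lem:theo} and Lemma~\ref{lem:inclusions} to the statement here, namely verifying that $\Pa(u_{m-1}\alp) = \Pa(w)$ and $\ov{\Sa(\ov\alp\ov{v_{n-1}})} = \ov{\Sa(w)}$ because $u_{m-1}$ and $\ov{v_{n-1}}$ are the longest prefix and suffix — a fact recorded in the paragraph immediately preceding the corollary.
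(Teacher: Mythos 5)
Your proposal is correct and follows essentially the same route as the paper: the paper's own justification (in the paragraph preceding the corollary) likewise uses Lemma~\ref{lem:inclusions} to get $\Pa(w)^* = \ov{\Sa(w)}^*$ and then combines Lemma~\ref{lem:otto} with Lemma~\ref{lem:theo} to conclude $\Has(w) = \Pa(w)^* w\,\ov{\Pa(w)}^*$, which is regular. Your write-up merely makes explicit the bookkeeping ($\Pa(u_{m-1}\alp)=\Pa(w)$, $\ov{\Sa(\ov\alp\ov{v_{n-1}})}=\ov{\Sa(w)}$) that the paper leaves implicit.
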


Our next result, Theorem~\ref{thm:uequalv}, shows
that if we  can state a necessary and sufficient condition for the 
special case where $u_1 = v_1$, then we can extend this condition to
the general case.
We need a preliminary lemma in order to prove Theorem~\ref{thm:uequalv}.

\begin{lemma}\label{lem:heinz}
	Let $w\in\aSSa$ be a non-crossing $(m,n)$-word with $m,n\geq 2$,
	let $\Pa(w) = \oneset{u_0,\ldots,u_{m-1}}$ such that $u_0 <_p \cdots <_p u_{m-1}$,
	and let $\Sa(w) = \oneset{\ov{v_0},\ldots,\ov{v_{v-1}}}$ such that
	$\ov{v_{n-1}} >_s \cdots >_s \ov{v_0}$.
	\begin{enumerate}
		\item If $u_1 = v_1$, then $\Has(w) \sse u_1\alp\Sig^* \cap \Sig^*\ov\alp\ov{u_1}$.
		\item If $u_1\neq v_1$, then $\Has(w\ov{u_i})\cap\Has(v_j w) = \es$
			for all $1\leq i < m$ and $1\leq j < n$.
		\item Let $1\leq i < j < m$.
			If $u_j >_s u_i$, then $\Has(w\ov{u_j}) \sse \Has(w\ov{u_i})$;
			otherwise, 
			$\Has(w\ov{u_i}) \cap \Has(w\ov{u_j}) = \es$.
	\end{enumerate}
\end{lemma}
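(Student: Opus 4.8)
The three statements are somewhat independent and I would prove them in the order (3), (1), (2), since (3) is purely combinatorial about $\alpha$-prefixes and will be reused, while (1) and (2) describe how the "left half" and "right half" of a hairpin completion interact.

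*Statement (3).* Suppose $1 \le i < j < m$. If $u_j \ge_s u_i$, then by Lemma~\ref{lem:karl} (applied with $x = u_i \in \alpha\Sigma^*\alpha^{-1}$, which holds because $u_i \in \Pa(w) \subseteq \aSa$), we get $u_j = u_{j-\ia(u_i)}u_i = u_{j-i}u_i$. The plan is to show $w\ov{u_j} = w\ov{u_i}\,\ov{u_{j-i}}$ is reachable from $w\ov{u_i}$ by a single right hairpin completion: since $\ov\alpha\ov{u_i}$ is a suffix of $w\ov{u_i}$ and $u_{j-i} \in \Pa(w) \subseteq \Pa(u_i\alpha)$ is an $\alpha$-prefix of $w$, hence also an $\alpha$-prefix of $w\ov{u_i}$ (prefixes are unaffected by appending to the right), we have $w\ov{u_i} \to_\RH w\ov{u_i}\ov{u_{j-i}} = w\ov{u_j}$, provided the lengths permit — here I would invoke Lemma~\ref{lem:length_nonoverlap_apref_casuf} (using $n \ge 2$) to guarantee the suffix $\ov\alpha$ does not overlap the relevant factor $\alpha$. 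Iterating the obvious inclusion $\Has(w\ov{u_j}) \subseteq \Has(w\ov{u_i})$ follows since $\Has$ is monotone under $\to_\cH$-reachability. For the "otherwise" case, if $u_j \not\ge_s u_i$ and $u_i \not\ge_s u_j$ (the latter impossible as $|u_i| < |u_j|$), then by Corollary~\ref{cor:nonx_initial_once} the factor $w$ occurs exactly once in every element of $\Has(w\ov{u_i})$ and of $\Has(w\ov{u_j})$; by Lemma~\ref{lem:otto} every such element factors as $x\cdot w\cdot \ov{u_i}\cdot(\text{stuff})$ versus $x'\cdot w\cdot \ov{u_j}\cdot(\text{stuff}')$, and the letter immediately following this unique occurrence of $w$ is forced to lie inside $\ov{u_i}$ resp.\ $\ov{u_j}$; since neither $u_i$ nor $u_j$ is a suffix of the other, the words $\ov{u_i}$ and $\ov{u_j}$ are not prefixes of one another, and the two factorizations are incompatible — so the intersection is empty.

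*Statement (1).* If $u_1 = v_1$, then every word of $\Has(w)$ begins with $u_1\alpha$ and ends with $\ov\alpha\ov{u_1}$. The plan is induction on the number of hairpin-completion steps, using Lemma~\ref{lem:otto}: a word $z \in \Has(w)$ has the form $x_s\cdots x_1 w \ov{y_1}\cdots\ov{y_t}$ with each $x_i,y_i \in \Pa(w)\cup\ov{\Sa(w)}$. If $s = 0$, the prefix is that of $w$, which is $u_1\alpha\cdots$ because $u_1 \in \Pa(w)$ means $u_1\alpha \le_p w$ and $u_1 = v_1 \ne \e$ (as $m,n \ge 2$; note $u_1$ is the shortest nonempty $\alpha$-prefix). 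If $s \ge 1$, the prefix of $z$ is the prefix of $x_s\alpha$; since $x_s \in \Pa(w)\cup\ov{\Sa(w)}$ and $x_s \ne \e$ would need checking — here the key point is that $x_s$ is either some $u_i$ with $i \ge 1$ (hence $u_1\alpha \le_p u_i\alpha$, done by the ordering) or some $\ov{v_j}$ with $j \ge 1$ (hence $v_1\alpha \le_p v_j\alpha = $ its prefix, and $v_1 = u_1$). The suffix claim is symmetric via $\Sa(w) = \ov{\Pa(\ov w)}$.

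*Statement (2).* This is the one I expect to be the main obstacle. Assume $u_1 \ne v_1$; WLOG $|u_1| < |v_1|$ (if incomparable, handle similarly). Take any $P \in \Has(w\ov{u_i})$ and $Q \in \Has(v_j w)$ with $1 \le i < m$, $1 \le j < n$, and suppose $P = Q$. By Corollary~\ref{cor:nonx_initial_once} the factor $w$ appears exactly once in this common word, so the two factorizations (from Lemma~\ref{lem:otto}, noting $w\ov{u_i} \in \Has(w)$ and $v_j w \in \Has(w)$) must align on that occurrence of $w$. The word immediately to the \emph{left} of $w$ inside $P$ is empty or built from $\Pa(w)\cup\ov{\Sa(w)}$ appended to $w\ov{u_i}$'s left side — but $w\ov{u_i}$ added nothing on the left, so in $P$ the occurrence of $w$ is either at position $0$ or preceded by material whose last $|u_1|$-or-so letters are determined; whereas in $Q$ the occurrence of $w$ is preceded by $v_j$ (nonempty, $j \ge 1$) followed by possibly more. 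The contradiction should come from comparing what sits immediately after $w$: in $P$ that is $\ov{u_i}\cdots$, and since $u_i \ge_p$-extends $u_1$ we read off $\ov{u_1}$ hence the letters of $\ov{u_1}$; in $Q$ the text after $w$ is $\ov{y_1}\cdots$ with $y_1 \in \Pa(w)\cup\ov{\Sa(w)}$, and one must show this cannot produce $\ov{u_1}$ as the start — the delicate point being that $y_1$ could a priori be $\ov{v_{j'}}$ for some $j'$, and one needs $u_1 \ne v_1$ together with the $\alpha$-index bookkeeping (Lemma~\ref{lem:karl}, the remark that $x$ with larger $\ia$ cannot be a factor of $y$ with smaller) to rule out the overlap. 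I would set this up by tracking the first letter after the unique $w$ and the first letter before it simultaneously, reducing everything to: if $\ov{u_1}$ and $\ov{v_1}$ are not one a prefix of the other — which is exactly $u_1 \ne v_1$ after accounting for $|u_1| \ne |v_1|$, using that both are $\alpha$-prefixes so one is a prefix of the other iff one equals the other-truncated, forcing equality when lengths... — no wait, distinct $\alpha$-prefixes $u_1 \ne v_1$ of the \emph{minimal nonzero} index need \emph{not} be comparable, and that incomparability is precisely what blocks $P = Q$. This case analysis on comparability of $u_1,v_1$ is the crux and will need the most care.
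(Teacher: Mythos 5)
Your treatments of parts (1) and (3) are essentially sound and close to the paper's own argument: part (3) is exactly the paper's proof (Lemma~\ref{lem:karl} plus a single right hairpin completion for the suffix case, uniqueness of the factor $w$ from Corollary~\ref{cor:nonx_initial_once} for the disjointness case; your parenthetical ``$\Pa(w)\sse\Pa(u_i\alp)$'' has the inclusion backwards, but you never actually use it --- what you need, $u_{j-i}\in\Pa(w\ov{u_i})$, you justify correctly), and part (1) via the factorization of Lemma~\ref{lem:otto} is a legitimate alternative to the paper's direct induction on derivation steps.

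The genuine gap is part (2), and it is not just a matter of ``care'': the approach you sketch would fail. Comparing the blocks immediately adjacent to the unique occurrence of $w$ cannot separate the two languages, because a word of $\Has(v_jw)$ may perfectly well have $\ov{u_i}$ as the first block appended to the right of $w$, and a word of $\Has(w\ov{u_i})$ may have $v_j$ as the first block appended on the left; so the local neighbourhood of $w$ looks the same on both sides. The missing idea is a \emph{minimality} argument about $u_1$ and $v_1$, combined with applying the part-(1) invariant to the words $w\ov{u_i}$ and $v_jw$ rather than to $w$. Since $u_1$ is the shortest nonempty $\alp$-prefix of $w$ and $\ov{v_1}$ the shortest nonempty $\ov\alp$-suffix, $v_1\alp$ cannot be a proper prefix of $u_1\alp$ (else $v_1$ would be a nonempty $\alp$-prefix shorter than $u_1$), and $u_1\alp$ cannot be a proper prefix of $v_1\alp$ (else $\ov{u_1}$ would be a nonempty $\ov\alp$-suffix shorter than $\ov{v_1}$); hence $u_1\neq v_1$ forces $u_1\alp\Sig^*\cap v_1\alp\Sig^*=\es$, settling exactly the comparability question you left open (your ``WLOG $\abs{u_1}<\abs{v_1}$'' case is in fact impossible). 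Now $w\ov{u_i}$ has prefix $u_1\alp$ and suffix $\ov\alp\ov{u_1}$ (because $u_1\alp\leq_p u_i\alp$), and $v_jw$ has prefix $v_1\alp$ and suffix $\ov\alp\ov{v_1}$, so the preservation argument of part (1) gives $\Has(w\ov{u_i})\sse u_1\alp\Sig^*\cap\Sig^*\ov\alp\ov{u_1}$ and $\Has(v_jw)\sse v_1\alp\Sig^*\cap\Sig^*\ov\alp\ov{v_1}$, and disjointness follows immediately. Your proposal never reaches either of these two steps, so as written part (2) is unproved.
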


\begin{proof}
	Consider a word $z$ with prefix $u_1\alp$ and suffix $\ov\alp \ov{u_1}$.
	It is easy to see that any word which is a hairpin completion of $z$ has prefix
	$u_1\alp$ and suffix $\ov\alp \ov{u_1}$, as well.
	Statement~1 follows by induction.

	Note that $v_1\alp$ is not a proper prefix of $u_1\alp$;
	otherwise, $v_1$ would be a $\alp$-prefix of $w$ which is longer than $u_0=\e$ but shorter than $u_1$.
	Symmetrically, $u_1\alp$ is not a proper prefix of $v_1\alp$.
	By the first statement, we have $\Has(w\ov{u_i}) \sse u_1\alp\Sig^* \cap \Sig^*\ov\alp\ov{u_1}$ and
	$\Has(v_1 w) \sse v_1\alp\Sig^* \cap \Sig^*\ov\alp\ov{v_1}$ for all $1\leq i < m$ and $1\leq j < n$.
	Therefore, if $u_1 \neq v_1$,
	the intersection $\Has(w\ov{u_i})\cap\Has(v_j w)$ has to be empty
	which proves statement~2.

	For statement~3, assume first $u_j >_s u_i$.
	By Lemma~\ref{lem:karl} we obtain $u_j = u_{j-i}u_i$ and hence
	$w\ov{u_j} = w\ov{u_i}\ov{u_{j-i}} \in \Has(w \ov{u_i})$.
	We conclude $\Has(w \ov{u_j}) \sse \Has(w \ov{u_i})$.

	Now assume $u_i$ is not a suffix of $u_j$.
	As $w\ov{u_i}$ (resp.\ $w\ov{u_j}$) is a factor of every word in $\Has(w\ov{u_i})$
	(resp.\ $\Has(w\ov{u_j})$) and
	there is only one factor $w$ in every word from $\Has(w)$,
	it is plain that the intersection $\Has(w\ov{u_i}) \cap \Has(w\ov{u_j})$
	is empty (see Corollary~\ref{cor:nonx_initial_once}).
\end{proof}

Let us define the index sets
\begin{align*}
	I &= \set{i}{1\leq i < m \land u_i\notin\oneset{u_1,\ldots,u_{i-1}}^*}, \\
	J &= \set{j}{1\leq j < n \land v_j\notin\oneset{v_1,\ldots,v_{j-1}}^*}.
\end{align*}
Thus, for all $i\in I$, no proper suffix of $u_i$ belongs to $\Pa(w)$ and
for all $j\in J$, no proper prefix of $\ov{v_j}$ belongs to $\Sa(w)$, 
see Lemma~\ref{lem:karl}.
By the previous lemma, if $v_1\neq u_1$, then
$\Has(w)$ is the disjoint union
\begin{equation*}
	\Has(w) = \oneset{w} \cup \bigcup_{i\in I}\Has(w\ov{u_i}) \cup 
	\bigcup_{j\in J} \Has(v_jw).
\end{equation*}

Note that for every word $w\ov{u_i}$ with $i\in I$
or $v_jw$ with $j\in J$,
the shortest $\alp$-prefix is complementary to the shortest $\ov\alp$-suffix.
This observation leads us to an important theorem that allows us
to reduce the general case to the special case where $u_1 = v_1$.

\begin{theorem}\label{thm:uequalv}
	Let $w\in\aSSa$ be a non-crossing $(m,n)$-word with $m,n\geq 2$,
	let $\Pa(w) = \oneset{u_0,\ldots,u_{m-1}}$ such that $u_0 <_p \cdots <_p u_{m-1}$,
	let $\Sa(w) = \oneset{\ov{v_0},\ldots,\ov{v_{v-1}}}$ such that
	$\ov{v_{n-1}} >_s \cdots >_s \ov{v_0}$, and define $I$, $J$ as above.

	For $u_1\neq v_1$,
	$\Has(w)$ is regular if and only if
	$\Has(w\ov{u_i})$ is regular for all $i\in I$ and
	$\Has(v_jw)$  is regular for all $j\in J$.
\end{theorem}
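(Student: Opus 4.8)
The plan is to prove both directions of the equivalence, using the disjoint-union decomposition
\[
	\Has(w) = \oneset{w} \cup \bigcup_{i\in I}\Has(w\ov{u_i}) \cup \bigcup_{j\in J} \Has(v_jw)
\]
that was just established (valid precisely because $u_1\neq v_1$, via Lemma~\ref{lem:heinz}(2) and the observations following it) together with the pairwise disjointness of the pieces. The forward direction is the routine one: if $\Has(w)$ is regular, I would intersect it with the regular languages $u_i\alp\Sig^* \cap \Sig^*\ov\alp\ov{u_i}$ (for $i\in I$) and $v_j\alp\Sig^* \cap \Sig^*\ov\alp\ov{v_j}$ (for $j\in J$). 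By Lemma~\ref{lem:heinz}(1), $\Has(w\ov{u_i})$ already lies inside $u_1(w\ov{u_i})\text{-prefix/suffix}$ type languages, and more importantly the pieces $\Has(w\ov{u_i})$, $\Has(v_j w)$ are mutually disjoint and each is separated from $\oneset w$ by the ``$w$ occurs once'' phenomenon (Corollary~\ref{cor:nonx_initial_once}) combined with prefix/suffix bookkeeping. So each $\Has(w\ov{u_i})$ is obtained from $\Has(w)$ by intersecting with a suitable regular language that isolates exactly that piece; hence it is regular. The only care needed is to name, for each $i\in I$, a regular language $R_i$ with $\Has(w)\cap R_i = \Has(w\ov{u_i})$ — e.g.\ the words whose unique occurrence of $w$ is immediately followed by $\ov{u_i}$ and not by any $\ov{u_{i'}}$ with $u_{i'} >_s u_i$; the non-crossing structure makes such an $R_i$ regular and correct.

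For the converse direction, I would assume $\Has(w\ov{u_i})$ is regular for every $i\in I$ and $\Has(v_jw)$ is regular for every $j\in J$, and conclude $\Has(w)$ is regular. The natural first attempt is simply to take the finite union in the displayed decomposition, which is a finite union of regular languages and hence regular, plus the singleton $\oneset w$. If the decomposition is literally an equality of sets — which the text asserts it is when $u_1\neq v_1$ — then this direction is essentially immediate. So the real content of the theorem is the \emph{forward} implication and the correctness of the decomposition; I would spend the bulk of the proof justifying that the union on the right-hand side really does exhaust $\Has(w)$ and is disjoint, citing Lemma~\ref{lem:heinz} in full (all three parts) and Corollary~\ref{cor:nonx_initial_once}.

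The step I expect to be the main obstacle is verifying that the decomposition is exhaustive, i.e.\ that every $z\in\Has(w)$ with $z\neq w$ lies in some $\Has(w\ov{u_i})$ with $i\in I$ or some $\Has(v_jw)$ with $j\in J$ — not merely in some $\Has(w\ov{u_i})$ or $\Has(v_jw)$ with arbitrary $i,j$. For this I would argue: the first hairpin completion applied to $w$ produces either $w\ov{u_i}$ for some $1\le i<m$ or $v_jw$ for some $1\le j<n$ (Corollary~\ref{cor:1st_step_mn_ge2}); if that index $i\notin I$, then $u_i\in\oneset{u_1,\ldots,u_{i-1}}^*$, and by Lemma~\ref{lem:karl} some shorter $u_{i'}$ (with $i'\in I$, taking a minimal such) is a suffix of $u_i$, so by Lemma~\ref{lem:heinz}(3) $\Has(w\ov{u_i})\sse\Has(w\ov{u_{i'}})$; iterating, we land inside a piece indexed by $I$. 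Once the first step is safely inside a piece, Lemma~\ref{lem:heinz}(1) pins $z$ inside $u_1\alp\Sig^*\cap\Sig^*\ov\alp\ov{u_1}$ and all subsequent completions keep it there, so $z$ stays in that piece. Disjointness across distinct pieces then follows from Lemma~\ref{lem:heinz}(2) (for the $I$-versus-$J$ separation) and Lemma~\ref{lem:heinz}(3) together with Corollary~\ref{cor:nonx_initial_once} (for distinct indices within $I$, and symmetrically within $J$). With exhaustiveness and disjointness in hand, both implications of the theorem fall out as described above.
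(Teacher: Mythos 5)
Your overall route is the same as the paper's: decompose $\Has(w)$ as the disjoint union of $\oneset{w}$, the pieces $\Has(w\ov{u_i})$ for $i\in I$, and $\Has(v_jw)$ for $j\in J$; get sufficiency from the finite union; get necessity by intersecting $\Has(w)$ with a regular language that isolates a single piece. Your justification of exhaustiveness and disjointness (via Lemma~\ref{lem:karl}, Lemma~\ref{lem:heinz} and Corollary~\ref{cor:nonx_initial_once}) is sound, and the union direction is exactly the paper's one-liner. The genuine gap is in the direction you call ``routine'': set-theoretic disjointness of the pieces does not let you pass regularity from the union down to the parts, so everything rests on actually exhibiting, for each $i\in I$, a regular $R_i$ with $\Has(w)\cap R_i=\Has(w\ov{u_i})$ --- and both candidates you name fail. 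The language $u_i\alp\Sig^*\cap\Sig^*\ov\alp\ov{u_i}$ does not isolate the $i$-th piece: since $u_i\alp\le_p w$, every word of every piece that has not yet been extended on the left has the prefix $u_i\alp$, while conversely a word of $\Has(w\ov{u_i})$ such as $w\ov{u_i}\,\ov{u_1}$ in general does not have the suffix $\ov\alp\ov{u_i}$. Your refined $R_i$ (``$w$ followed by $\ov{u_i}$ but not by any $\ov{u_{i'}}$ with $u_{i'}>_s u_i$'') under-covers the piece: nothing in the hypotheses prevents some $u_{i'}\in\Pa(w)$ from having $u_i$ as a proper suffix; by Lemma~\ref{lem:karl} then $u_{i'}=u_{i'-i}u_i$, so such an $i'$ lies outside $I$ (hence $i\in I$ does not exclude it), and by Lemma~\ref{lem:heinz}(3) the word $w\ov{u_{i'}}=w\ov{u_i}\,\ov{u_{i'-i}}$ belongs to $\Has(w\ov{u_i})$ yet is thrown out by your exclusion clause. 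Then $\Has(w)\cap R_i\subsetneq\Has(w\ov{u_i})$ and you can conclude nothing about the regularity of the piece.

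The repair is exactly the paper's choice
\begin{equation*}
	R_i \;=\; \bigl(u_1\alp\Sig^*\cap\Sig^*\ov\alp\ov{u_1}\bigr)\cap\Sig^*w\ov{u_i}\Sig^*,
\end{equation*}
with no exclusion clause at all. Every word of $\Has(w\ov{u_i})$ keeps the prefix $u_1\alp$ and the suffix $\ov\alp\ov{u_1}$ (the argument of Lemma~\ref{lem:heinz}(1) applied to $w\ov{u_i}$) and contains $w\ov{u_i}$ as a factor, so $\Has(w\ov{u_i})\sse\Has(w)\cap R_i$. Conversely, the prefix/suffix constraint eliminates the $\Has(v_jw)$ pieces because $u_1\neq v_1$ and neither of $u_1\alp$, $v_1\alp$ is a prefix of the other (Lemma~\ref{lem:heinz}(2)), and the factor constraint eliminates $\Has(w\ov{u_{i'}})$ for $i'\in I$, $i'\neq i$: by Corollary~\ref{cor:nonx_initial_once} the factor $w\ov{u_i}$ would have to sit on the unique occurrence of $w$, forcing one of $u_i,u_{i'}$ to be a suffix of the other, which Lemma~\ref{lem:karl} forbids for two distinct indices in $I$. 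So the intersection is exactly $\Has(w\ov{u_i})$, and the extra clause you added is precisely what breaks this identity rather than what makes it work.
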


\begin{proof}
	As we already stated
	\begin{equation*}
		\Has(w) = \oneset{w} \cup \bigcup_{i\in I}\Has(w\ov{u_i}) \cup 
		\bigcup_{j\in J} \Has(v_jw).
	\end{equation*}

	If every language in the union is regular, then the union itself is regular.
	Conversely, assume $\Has(w\ov{u_i})$ is not regular for some $i\in I$,
	then the intersection
	\begin{equation*}
		\Has(w) \cap( u_1\alp\Sig^* \cap
		\Sig^*\ov\alp\ov{u_1}) \cap \Sig^*w\ov{u_i}\Sig^* = \Has(w\ov{u_i})
	\end{equation*}
	is not regular and hence $\Has(w)$ is not regular either,
	by Lemma~\ref{lem:heinz}.
	The argument for non-regular $\Has(v_jw)$ with $j\in J$ is analogous.
\end{proof}

Theorem~\ref{thm:uequalv} justifies the assumption $u_1 = v_1$ that we make from now on.
The next two theorems prove a necessary and sufficient condition for the
regularity of $\Has(w)$.
We start by proving that the condition is sufficient.

\begin{theorem}\label{thm:sufficient}
	Let $w\in\aSSa$ be a non-crossing $(m,n)$-word with $m,n\geq 2$,
	let $\Pa(w) = \oneset{u_0,\ldots,u_{m-1}}$ such that $u_0 <_p \cdots <_p u_{m-1}$,
	and let $\Sa(w) = \oneset{\ov{v_0},\ldots,\ov{v_{v-1}}}$ such that
	$\ov{v_{n-1}} >_s \cdots >_s \ov{v_0}$.
	
	$\cH_\alpha^*(w)$ is regular if both of the following conditions hold: 
	\begin{enumerate}
	\item	for all $1 \le s < m$, either $u_s \in \ov{\Sa(w)}^*$ or
		$\ov{\Sa(w)}^* \subseteq \{u_1, \ldots, u_{s}\}^*$, and 
	\item	for all $1 \le t < n$, either $v_t \in P_\alpha(w)^*$ or
		$P_\alpha(w) \subseteq \{v_1, \ldots, v_{t}\}^*$t. 
	\end{enumerate}
\end{theorem}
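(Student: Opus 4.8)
The plan is to show that under conditions~1 and~2 the iterated hairpin completion $\Has(w)$ coincides with an explicitly described regular language, namely a union of languages of the form $\Pa(u_s\alp)^*w\ov{\Pa(u_s\alp)}^*$ (and the symmetric ones for suffixes), each of which is regular because $\Pa(u_s\alp)$ is a finite set. The inclusion $\Has(w)\sse(\Pa(w)\cup\ov{\Sa(w)})^*w(\ov{\Pa(w)}\cup\Sa(w))^*$ from Lemma~\ref{lem:otto} gives the outer bound; the work is to use conditions~1 and~2 to collapse this "mixed" product into a finite union of "pure" products, and then to verify that each pure product is actually contained in $\Has(w)$ by Lemma~\ref{lem:theo}.

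First I would analyze the structure of an arbitrary $z\in\Has(w)$. By Lemma~\ref{lem:otto} we may write $z=x_s\cdots x_1\,w\,\ov{y_1}\cdots\ov{y_t}$ with each $x_i,y_j\in\Pa(w)\cup\ov{\Sa(w)}$. The key combinatorial step is to show that, thanks to condition~1, the whole left block $x_s\cdots x_1$ lies in $\Pa(u_r\alp)^*$ for a suitable index $r$: intuitively, if any $x_i$ equals some $\ov{v_p}$ that is \emph{not} already in $\Pa(w)^*$, then by condition~1 we must be in the case $\ov{\Sa(w)}^*\sse\{u_1,\dots,u_s\}^*$ for the relevant $s$, which lets us rewrite that $\ov{v_p}$ as a product of short $\alp$-prefixes; iterating, every letter of the left block can be expressed using $\alp$-prefixes up to some bounded index. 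Symmetrically, condition~2 handles the right block. I would make this precise by an induction on the lengths (or $\alp$-indices) of the $x_i$'s, using Lemma~\ref{lem:siegbert} and Lemma~\ref{lem:karl} to control which $u_i$'s can appear, and Lemma~\ref{lem:inclusions} to see that the two descriptions $\Pa(u\alp)^*$ and $\ov{\Sa(\ov\alp\ov v)}^*$ agree once one sits inside the other.

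Having localized $z$ inside some $\Pa(u_r\alp)^*\,w\,\ov{\Pa(u_r\alp)}^*$, I would invoke Lemma~\ref{lem:theo}: since by the collapsing argument we can find $u\in\Pa(w)$ and $\ov v\in\Sa(w)$ with $v\in\Pa(u\alp)^*$ and $u\in\ov{\Sa(\ov\alp\ov v)}^*$ (these are exactly the hypotheses of that lemma, and they are produced by conditions~1--2 together with $u_1=v_1$), the set $\Pa(u\alp)^*w\ov{\Pa(u\alp)}^*$ is contained in $\Has(w)$. Conversely that set obviously contains $z$ and, ranging over the finitely many admissible indices $r$, these sets are each regular and their union is regular. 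Combining the two inclusions gives $\Has(w)=\bigcup_r \Pa(u_r\alp)^*\,w\,\ov{\Pa(u_r\alp)}^*$, a finite union of regular languages, hence regular.

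The main obstacle is the "collapsing" step: showing rigorously that conditions~1 and~2 force every mixed product $x_s\cdots x_1$ of $\alp$-prefixes and complemented $\ov\alp$-suffixes to be expressible as a \emph{pure} product of $\alp$-prefixes of bounded index. The subtlety is that condition~1 is a disjunction applied \emph{per index} $s$, so one must argue that the various local rewritings are compatible and do not cause the bounding index to drift upward uncontrollably; this is where Lemma~\ref{lem:siegbert} (which pins the usable $u_i$ down to index $\le\ia(x)$) and the monotonicity of $\Pa$ under the prefix order become essential, and where I expect the bookkeeping to be most delicate. Everything after that — regularity of each $\Pa(u_r\alp)^*w\ov{\Pa(u_r\alp)}^*$ and closure of regularity under finite union — is routine.
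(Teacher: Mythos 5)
There is a genuine gap, and it sits exactly where the theorem has its content. Your final claim, $\Has(w)=\bigcup_r \Pa(u_r\alp)^*\,w\,\ov{\Pa(u_r\alp)}^*$, is false in the only case that is not already settled by Corollary~\ref{cor:sufone}, namely when some $u_s\notin\ov{\Sa(w)}^*$ (and then, by condition~1, $\ov{\Sa(w)}^*\sse\oneset{u_1,\ldots,u_s}^*$). In that case the hypotheses of Lemma~\ref{lem:theo} cannot be met with $u=u_r$ for any $r\ge s$: since $\ov{\Sa(\ov\alp\ov{v_j})}^*=\oneset{v_0,\ldots,v_j}^*\sse\ov{\Sa(w)}^*$, the requirement $u_r\in\ov{\Sa(\ov\alp\ov v)}^*$ fails, so the lemma does not give you $\Pa(u_r\alp)^*w\,\ov{\Pa(u_r\alp)}^*\sse\Has(w)$ for those $r$ --- and indeed that inclusion is false. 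Concretely, $u_{m-1}w$ lies in your union (take $r=m-1$, empty right block) but not in $\Has(w)$: a word of $\Has(w)$ with nothing to the right of $w$ can only be produced by left hairpin completions, and these can only prepend elements of $\ov{\Sa(w)}^*$, which excludes $u_{m-1}$. Conversely, if you restrict the union to the indices $r$ for which Lemma~\ref{lem:theo} does apply (those with $u_r\in\ov{\Sa(w)}^*$), you miss words such as $u_{m-1}u_1^\ell\,w\,\ov{u_1}^\ell\,\ov{u_{m-1}}$, which are in $\Has(w)$ because the occurrence of $\ov{u_{m-1}}$ on the right licenses the later prepending of $u_{m-1}$. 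So no choice of ``admissible'' $r$ makes your identity correct; the language is not a finite union of pure two-sided products, and conditions~1--2 do not ``produce'' the two-sided hypotheses of Lemma~\ref{lem:theo} at the indices where you need them.

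The paper's proof avoids this by choosing an asymmetric, finer description: with $s$ minimal such that $u_s\notin\ov{\Sa(w)}^*$, it sets $R'=\ov{\Sa(w)}^*\,w\,\oneset{\ov{u_1},\ldots,\ov{u_{s-1}}}^*$ and, for $s\le i<m$, $R_i=\oneset{u_1,\ldots,u_i}^*\,w\,\oneset{\ov{u_1},\ldots,\ov{u_i}}^*\ov{u_i}\oneset{\ov{u_1},\ldots,\ov{u_i}}^*$, where the forced occurrence of $\ov{u_i}$ on the right encodes precisely the derivability constraint your decomposition loses (a ``large'' prefix $u_i$ can appear on the left only after a matching $\ov{u_i}$ is available on the right). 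The inclusion $R\sse\Has(w)$ does use Corollary~\ref{cor:1st_step_mn_ge2} and Lemma~\ref{lem:theo}, but the converse inclusion $\Has(w)\sse R$ is proved by induction on the length of the derivation, tracking how one hairpin-completion step moves a word between $R'$ and the $R_i$; it is not obtained by collapsing the superset of Lemma~\ref{lem:otto}, which is strictly larger than $\Has(w)$ here. To repair your argument you would need both to replace the pure products by sets carrying this ``at least one $\ov{u_i}$ on the right'' (or symmetric) marker and to replace the collapsing step by an argument about derivations rather than about the shape of words.
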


\begin{proof}
	Assume that both the conditions 1 and 2 are satisfied. 
	We may assume that there is $1 \le s < m$ such that $u_s \not\in \ov{\Sa(w)}^*$
	or there is $1 \le t < n$ such that $v_t \not\in P_\alpha(w)^*$;
	otherwise Corollary~\ref{cor:sufone} yields the regularity. 
	In addition, we cannot assume the existence of both such $s$ and $t$ as
	$u_s \not\in \ov{\Sa(w)}^*$ implies $\ov{\Sa(w)}^* \subseteq \{u_1, \ldots, u_{s}\}^*$
	due to the condition 1. 
	The symmetry in the roles of conditions 1 and 2 enables us to assume that such $s$
	exists without loss of generality, and moreover,
	we can assume that for all $1 \le i < s$, $u_i \in \ov{\Sa(w)}^*$
	and for all $s \le i < m$, $u_i \not\in \ov{\Sa(w)}^*$ by Lemma~\ref{lem:inclusions}. 

	Let $R' = \ov{\Sa(w)}^* w \{\ov{u_1}, \ldots, \ov{u_{s-1}}\}^*$ and for $s \le i < m$, let
	\begin{equation*}
		R_i = \{u_1, \ldots, u_i\}^* w \{\ov{u_1}, \ldots, \ov{u_i}\}^*
			\ov{u_i} \{\ov{u_1}, \ldots, \ov{u_i}\}^*.
	\end{equation*}
	Then we let $R = \bigcup_{s \le i < m} R_i \cup R'$ and we claim $\Has(w) = R$. 

	Firstly, we prove that $R \subseteq \cH_\alpha^*(w)$. 
	Since $m, n \ge 2$, Corollary~\ref{cor:1st_step_mn_ge2} can be used to see that
	$w \{\ov{u_1}, \ldots, \ov{u_i}\}^*\ov{u_i} \subseteq \cH_\alpha^*(w)$ for
	$s \le i < m$, and by Lemma~\ref{lem:theo}, 
	\begin{equation*}
		R_i = \{u_1, \ldots, u_i\}^* w \{\ov{u_1}, \ldots, \ov{u_i}\}^*
			\ov{u_i} \{\ov{u_1}, \ldots, \ov{u_i}\}^* \subseteq \cH_\alpha^*(w). 
	\end{equation*}

	Consider $z\in R'$.
	We may factorize $z = x_i\cdots x_1 w \ov{y_1}\cdots \ov{y_j}$ where
	$x_1,\ldots,x_i\in \ov{\Sa(w)}$ and $y_1,\ldots,y_j \in \oneset{u_1,\ldots,u_{s-1}}$.
	Let $1\leq t < n$ be minimal such that $v_t\notin\oneset{u_1,\ldots,u_{s-1}}^*$.
	As $v_t\in\oneset{u_1,\ldots,u_s}^*$, we see that $u_s$ is a factor of $v_t$ and $t\geq s$.
	By Lemma~\ref{lem:siegbert}, $u_{s-1}\in\oneset{v_1,\ldots,v_{t-1}}^*$
	and, by the minimality of $t$, $v_{t-1}\in\oneset{u_1,\ldots,u_{s-1}}^*$.
	If $\ia(x_\ell) < t$ for all $1\leq\ell\leq i$,
	then, due to Lemma~\ref{lem:theo},
	\begin{equation*}
		z\in \oneset{v_1,\ldots,v_{t-1}}^* w \oneset{\ov{v_1},\ldots,\ov{v_{t-1}}}^*\sse\Has(w).
	\end{equation*}
	Otherwise, let  $1\leq \ell \leq i$ be maximal such that $\ia(x_\ell) \ge t$
	and let $w' = x_\ell \cdots x_1 w$.
	Observe that $w\to_\LH^* w'$ and, again by Lemma~\ref{lem:theo},
	\begin{equation*}
		z \in \oneset{v_1,\ldots,v_{t-1}}^*w'\oneset{\ov{v_1},\ldots,\ov{v_{t-1}}}^*
			\sse\Has(w')\sse \Has(w).
	\end{equation*}
	
	Thus, $R \subseteq \cH_\alpha^*(w)$. 

	Now we prove the opposite inclusion by induction on the length of a derivation
	to generate a word in $\cH_\alpha^*(w)$ from $w$. 
	Clearly, $w \in R$ (base case). 
	As an induction hypothesis, we assume that any word which can be derived from $w$ by
	$\ell-1$ hairpin completions is in $R$, and consider $z \in \cH_\alpha^*(w)$
	whose shortest derivation from $w$ by hairpin completions is of length $\ell$. 
	Let $w'$ be the word that precedes $z$ on this shortest derivation, that is,
	$w' \in \Ha^{\ell-1}(w)$; hence, $w' \in R$ by the induction hypothesis. 
	Therefore, $w'$ must be either in $R_i$ for some $s \le i < m$ or in $R'$. 
	Let us consider the first case first. 
	If $z$ is obtained from $w'$ by right hairpin completion, then the complement of the
	extended part is in $\{u_1, \ldots, u_i\}^*\{\e, u_1, \ldots, u_{m-1}\}$,
	and hence, $z \in R_j$ for some $i \le j < m$. 
	Otherwise ($w' \to_\LH z$), $z \in R_i$ because
	$\ov{\Sa(w')} \subseteq \oneset{u_1, \ldots, u_s}^* \sse \oneset{u_1,\ldots,u_i}^*$. 
	Next we consider the case when $w' \in R'$. 
	Since $\oneset{u_1,\ldots,u_{s-1}}^* \sse \ov{\Sa(w)}^*$ it is easy to see that if $w' \to_\LH z$,
	then $z \in R'$ as well.
	Otherwise ($w' \to_\RH z$), $z \in w'\oneset{\e,\ov{u_1},\ldots,\ov{u_{m-1}}}\Sa(w)^*$ and,
	as $\Sa(w)^* \sse \oneset{\ov{u_1},\ldots,\ov{u_s}}^*$,
	if $z\notin R'$, this word is covered by some language $R_i$ where $s\leq i< m$.
	
	Consequently, $\cH_\alpha^*(w) = R$ is regular. \qed
\end{proof}

\begin{theorem}\label{thm:necessary}
	Let $w\in\aSSa$ be a non-crossing $(m,n)$-word with $m,n\geq 2$,
	let $\Pa(w) = \oneset{u_0,\ldots,u_{m-1}}$ such that $u_0 <_p \cdots <_p u_{m-1}$,
	let $\Sa(w) = \oneset{\ov{v_0},\ldots,\ov{v_{v-1}}}$ such that
	$\ov{v_{n-1}} >_s \cdots >_s \ov{v_0}$,
	and let $u_1 = v_1$.
	\begin{enumerate}
		\item
			$\Has(w)$ is not regular if there are $1\leq s < m$ and $1\leq t < n$
			such that $u_s\notin \oneset{v_1,\ldots,v_{n-1}}^*$ and
			$v_t\notin \oneset{u_1,\ldots,u_s}^*$.
		\item
			$\Has(w)$ is not regular if there are $1\leq s < m$ and $1\leq t < n$
			such that $u_s\notin \oneset{v_1,\ldots,v_{t}}^*$ and
			$v_t\notin \oneset{u_1,\ldots,u_{m-1}}^*$.
	\end{enumerate}
\end{theorem}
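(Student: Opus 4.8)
The two operations $\to_{\RH_\alp}$ and $\to_{\LH_\alp}$ are interchanged by the mirror--complement map: in the regime $m,n\geq 2$ (where Corollary~\ref{cor:1st_step_mn_ge2} removes any length side--condition) one has $w\to_{\RH_\alp}z\iff\ov w\to_{\LH_\alp}\ov z$ and $w\to_{\LH_\alp}z\iff\ov w\to_{\RH_\alp}\ov z$, using $\Sa(w)=\ov{\Pa(\ov w)}$; hence $\Has(w)=\ov{\Has(\ov w)}$. The word $\ov w$ is again a non-crossing element of $\aSSa$, its pair of parameters is $(n,m)$, and $u_i^{(\ov w)}=v_i^{(w)}$, $v_j^{(\ov w)}=u_j^{(w)}$, $u_1^{(\ov w)}=v_1^{(w)}=u_1^{(w)}=v_1^{(\ov w)}$, and $\oneset{u_1^{(\ov w)},\ldots,u_{n-1}^{(\ov w)}}^{*}=\ov{\Sa(w)}^{*}$. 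A direct translation then shows that the hypotheses of the second statement for $w$ are exactly the hypotheses of the first statement for $\ov w$. Since regular languages are closed under the mirror--complement map (it is a reversal followed by a letter renaming), $\Has(w)$ is regular if and only if $\Has(\ov w)$ is, so the second statement for $w$ follows from the first statement applied to $\ov w$. It therefore suffices to prove the first statement.

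\noindent\textbf{Strategy for the first statement.}
Suppose, for contradiction, that $\Has(w)$ is regular. Fixing the witnessing indices $s$ and $t$ (and replacing them by suitable minimal ones when convenient), the plan is to choose two short words $p$ and $q$ --- essentially $p=u_s$ and $q=\ov{v_t}$, possibly prefixed by $u_1$ so that the relevant endpoints are $\alp$ and $\ov\alp$ --- and to intersect $\Has(w)$ with the regular language $L=p^{*}wq^{*}$. Then $K:=\Has(w)\cap L$ is regular, and I would show that $K=\set{p^{a}wq^{b}}{\theta(a,b)}$ for a genuinely arithmetic predicate $\theta$ coupling $a$ and $b$ (of the prototype $a\le b$, up to an additive or multiplicative constant), which contradicts regularity by the pumping lemma. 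The role of the two hypotheses is precisely to create this coupling: because $u_s\notin\ov{\Sa(w)}^{*}$, a copy of $u_s$ cannot be prepended on the left of $w$ until some right hairpin completion has installed an $\ov{u_s}$-anchored suffix, so the left side cannot advance on its own; and because $v_t\notin\oneset{u_1,\ldots,u_s}^{*}$, the right--hand block cannot be absorbed into the monoid $\oneset{u_1,\ldots,u_s}^{*}$, so the ``free'' two--sided growth of Lemma~\ref{lem:theo} is unavailable here. Together these force every word of $K$ to arise by alternating a bounded number of anchoring right hairpin completions with many ordinary ones, which is what produces the numerical relation between the exponents.

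\noindent\textbf{Main work and main obstacle.}
Everything comes down to the exact description of $K$. The inclusion $\set{p^{a}wq^{b}}{\theta(a,b)}\sse K$ is the routine direction: for each admissible pair one writes an explicit derivation from $w$, first performing the bounded sequence of right hairpin completions that creates the needed anchor (exactly as in the proof of Lemma~\ref{lem:theo}) and then iterating left and right hairpin completions with the $\alp$-prefixes and $\ov\alp$-suffixes that are now available. The reverse inclusion --- that no derivation can do better --- is the real obstacle. Here one uses Corollary~\ref{cor:nonx_initial_once} (the factor $w$ occurs exactly once, so along any derivation the part left of $w$ only grows by prepending and the part right of $w$ only by appending), Lemma~\ref{lem:otto} (those parts always lie in $(\Pa(w)\cup\ov{\Sa(w)})^{*}$, respectively $(\ov{\Pa(w)}\cup\Sa(w))^{*}$), and Lemmas~\ref{lem:alpha},~\ref{lem:karl},~\ref{lem:siegbert} to pin down which $\alp$-prefixes and $\ov\alp$-suffixes an intermediate word of the constrained shape $p^{a'}wq^{b'}$ can possibly have; the two non-containment hypotheses are then invoked to bound how far each side can advance relative to the other. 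I expect the fiddly points to be: excluding ``long'' prepended or appended blocks that would reach across $w$ (which Corollary~\ref{cor:nonx_initial_once} essentially forbids, since such a block would create a second occurrence of $w$), and handling the degenerate cases where $u_s$ or $v_t$ happens to sit inside an infinite power $u_s^{\om}$ or $v_t^{\om}$, or overlaps one of the internal factors $\alp$ in fewer than $k$ letters --- a Fine--Wilf style periodicity argument disposes of these. Once $K$ is pinned down, the pumping lemma finishes the first statement, and with the reduction above, the second as well.
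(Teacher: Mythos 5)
Your reduction of statement~2 to statement~1 via the mirror--complement map is correct: indeed $\Has(\ov w)=\ov{\Has(w)}$, the $\alp$-prefixes and $\ov\alp$-suffixes swap roles, and the hypotheses of statement~2 for $w$ become those of statement~1 for $\ov w$. The paper instead folds both statements into one argument by passing to \emph{minimal} indices satisfying statement~1 and showing one may assume $s\leq t$; note that this normalization is not cosmetic --- the inequality $\ia(u_s)=s\leq t$ and the minimality of $s$ and $t$ are what later yield the facts that none of $v_1,\ldots,v_{n-1},u_1,\ldots,u_{s-1}$ is a suffix of $u_s$ and none of $u_1,\ldots,u_s,v_1,\ldots,v_{t-1}$ is a suffix of $v_t$. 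Even after your mirroring, a proof of statement~1 alone must still handle the case where the minimal witnesses satisfy $s>t$ (the paper shows this forces the conditions of statement~2, and re-normalizes); your sketch does not address this.

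The genuine gap, however, is the core of the theorem. What you label the ``main obstacle'' --- determining $\Has(w)\cap R$ exactly and showing no derivation can do better --- is essentially the entire proof in the paper, and your proposed witness $R=p^*wq^*$ with $p=u_s$, $q=\ov{v_t}$ and an expected coupling ``$a\le b$ up to constants'' is not the right shape. Your own heuristic points to the problem: since $u_s\notin\oneset{v_1,\ldots,v_{n-1}}^*$ (and, with $s$ minimal, no $v_j$ and no shorter $u_i$ is a suffix of $u_s$), a left hairpin completion can produce $u_s$ on the left only after a right completion has explicitly appended $\ov{u_s}$; but in a word of the form $u_s^aw\ov{v_t}^b$ the right context contains no appended $\ov{u_s}$, so there is no reason your intersection contains the intended family at all, let alone with the claimed arithmetic predicate. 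The paper's witness is $R=u_su_1^{\ge n}v_tw\ov{u_1}^{\ge n}\ov{u_s}$: the quantity that varies is the power of the \emph{common} word $u_1=v_1$, which grows freely on both sides by Lemma~\ref{lem:theo}, while $u_s$, $v_t$ and $\ov{u_s}$ each occur exactly once as anchors; the single $\ov{u_s}$ at the right end is what makes the left block producible at all, namely by one left completion copying the entire right context $u_su_1^{\ell}v_t$. The forbidden-suffix facts above, together with Corollary~\ref{cor:nonx_initial_once} and Lemma~\ref{lem:karl}, then force every derivation of a word of $R$ to have the shape $w\ra_\RH^* w\ov{u_1}^{\ell}\ov{u_s}\ra_\LH u_su_1^{\ell}v_tw\ov{u_1}^{\ell}\ov{u_s}$, so the coupling is exact equality of the two $u_1$-exponents, not an inequality. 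Without an explicit, verified choice of $R$ and this forced-derivation analysis, the proposal does not constitute a proof; the part you defer is precisely where the two non-membership hypotheses do their work.
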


\begin{proof}
	Let $s$ and $t$ be the minimal indices that satisfy the conditions in statement~1.
	Note that $s,t \geq 2$ and $u_s,v_t \notin u_1^+$ as $u_1 = v_1$.
	We will first argue, why the assumption $s\leq t$ is no restriction.

	Let us consider the case where the conditions in statement~1 are satisfied,
	but the conditions in statement~2 are not satisfied.
	It is easy to see that $u_s\notin \oneset{v_1,\ldots,v_t}^*$ is satisfied anyway.
	By contradiction assume $s > t$.
	Due to Lemma~\ref{lem:siegbert},
	\begin{align*}
		v_t \notin \oneset{u_1,\ldots,u_s}^* \quad&\iff\quad
		v_t \notin \oneset{u_1,\ldots,u_t}^* \\
		&\iff\quad
		v_t \notin \oneset{u_1,\ldots,u_{m-1}}^*.
	\end{align*}
	This satisfies the conditions of statement~2 and yields the contradiction.
	We conclude $s \leq t$.

	Now, let us consider the case where the conditions of both statements are satisfied.
	Let $s$ and $t'$ be the minimal indices such that
	$u_s \notin \oneset{v_1,\ldots,v_{n-1}}^*$ and $v_{t'}\notin\oneset{u_1,\ldots,u_{m-1}}^*$.
	We may assume $s \leq t'$, by symmetry.
	Note that $v_{t'-1} \in\oneset{u_1,\ldots,u_{m-1}}^*$, by the minimality of $t'$.
	If $v_{t'-1} \in\oneset{u_1,\ldots,v_{s}}^*$, then we see that $s$ and $t = t'$
	are the minimal indices that satisfy the conditions in statement~1 and $s\leq t$.
	Otherwise, there is a factorization
	$v_{t'-1} = x u_i y$ where $x\in \oneset{u_1,\ldots, u_s}^*$, $s < i < m$,
	and $y\in \oneset{u_1,\ldots,u_{m-1}}^*$.
	Note that $s$ and $t = \ia(x)+s+1$ (hence $v_t = xu_{s+1}$)
	are the minimal indices that satisfy the conditions
	of statement~1 and, obviously, $s\leq t$.
	
	Observe that the minimality of $s$ yields
	$u_1,\ldots, u_{s-1} \in \oneset{v_1,\ldots, v_{n-1}}^*$.
	If $x \in \oneset{u_1,\ldots,u_{s-1},v_1,\ldots,v_{n-1}}$
	was a suffix of $u_s$, then $u_s = u_{s-\ia(x)}x \in \oneset{v_1,\ldots, v_{n-1}}^*$,
	due to Lemma~\ref{lem:karl}.
	Hence, none of these words is a suffix of $u_s$.
	Symmetrically, none of the words $u_1,\ldots,u_{s}$, $v_1,\ldots,v_{t-1}$
	is a suffix of $v_t$.
	This observation will become crucial later.
	
	We will now define a regular language $R$ and show that the intersection
	$\Has(w) \cap R$ is not regular and, therefore, $\Has(w)$ is not regular either.
	We let
	\begin{equation*}
		R = u_s u_1^{\geq n} v_t w \ov{u_1}^{\geq n} \ov{u_s}
	\end{equation*}
	and we claim
	\begin{equation*}
		\Has(w) \cap R = \set{u_su_1^\ell v_t w \ov{u_1}^\ell \ov{u_s}}
			{\ell \geq n} =: L,
	\end{equation*}
	which is obviously not regular.
	Note that for every $\ell\geq n$
	\begin{equation*}
		w \ra_\RH^* w\ov{u_1}^\ell \ra_\RH w\ov{u_1}^\ell \ov{u_s} \ra_\LH
			u_su_1^\ell v_t w \ov{u_1}^\ell \ov{u_s}.
	\end{equation*}
	Hence, $\Has(w)\cap R \supseteq L$.
	
	Let $z = u_su_1^{\ell_1} v_t w \ov{u_1}^{\ell_2} \ov{u_s}$
	for some $\ell_1,\ell_2 \geq n$, which is in $R$.
	We assume $z\in \Has(w)$ and prove that this assumption requires $\ell_1 = \ell_2$.
	Let $z'$ be the {\em right-most} word in the derivation
	$w \ra_\cH^* z' \ra_\cH^* z$ such that $z' = x w y$
	for some words $x,y$ with $u_1^{\ell_1} v_t \geq_s x$ and $y \leq_p \ov{u_1}^{\ell_2}$;
	these conditions  mean that $x$ or $y$ does not
	overlap with the prefix $u_s$ or the suffix $\ov{u_s}$, respectively.
	By right-most we mean that either $z'\ra_\LH x'z'\ra_\cH^* z$
	where $x'x>_s u_1^{\ell_1}v_t$ or 
	$z'\ra_\RH z'y' \ra_\cH^* z$ where $\ov{u_1}^{\ell_2} <_p yy'$;
	this means $x'$ or $y'$ overlaps with the prefix $u_s$ or the suffix $\ov{u_s}$,
	respectively.
	Obviously, $y\in \ov{u_1}^*$.
	Note that if $x\neq \e$, then $x$ cannot be a proper suffix of $v_t$; otherwise
	a word from $u_1 = v_1,\ldots,v_{t-1}$ would be a suffix of $v_t$ which was excluded.
	Hence, $x = \e$ or $x\in u_1^*v_t$.

	First consider the case $z'\ra_\LH x'z'\ra_\cH^* z$ where $x'x>_s u_1^{\ell_1}v_t$.
	We show that this case cannot occur.
	Let $u'\neq \e$ be the suffix of $u_s$ such that $u'u_1^{\ell_1}v_t = x'x$.
	As $x'\in u_1^*\oneset{v_1,\ldots,v_{n-1}}$ and $u'\alp \leq_p x'\alp$,
	some word from $u_1=v_1,\ldots,v_{n-1}$ would be a suffix of $u_s$.
	
	Now consider the case $z'\ra_\RH z'y' \ra_\cH^* z$ where $\ov{u_1}^{\ell_2} <_p yy'$.
	Again, let $u'\neq \e$ be the suffix of $u_s$ such that $\ov{u_1}^{\ell_2}\ov{u'}= yy'$.
	As $u'\alp$ is a prefix of $xu_{m-1}\alp$ and none of the words $v_1,\ldots,v_t$,
	$u_1,\ldots,u_{s-1}$ is a suffix of $u_s$, we see that $u' = u_s$ and $x = \e$.
	
	Thus, in order to generate $z$ from $w$ by iterated hairpin completion, the derivation
	process must be of the form 
	\begin{equation*}
		w \ra_\RH^* w\ov{u_1}^{\ell_2}\ov{u_s} \ra_\LH^*
			u_s u_1^{\ell_1} v_t w\ov{u_1}^{\ell_2}\ov{u_s} = z.
	\end{equation*}
	
	Let $x$ be a (newly chosen) word
	such that $w\ov{u_1}^{\ell_2}\ov{u_s} \ra_\LH xw\ov{u_1}^{\ell_2}\ov{u_s}$
	is the first left hairpin completion in the derivation above.
	Therefore, $x\alp$ is a prefix of $u_s u_1^{\ell_2} v_{n-1}\alp$
	and $x$ is a suffix of $u_s u_1^{\ell_1} v_t$.
	In particular, every suffix $y$ of $x$ with $\ia(y)\leq t$ is a suffix of $v_t$.
	Recall that $\ia(u_s) = s\leq t$.
	If $x\alp$ was a prefix of $u_s\alp$, then
	some word from $u_1,\ldots, u_s$ would be a suffix of $v_t$ which is impossible.
	Verify that $x\in u_su_1^+$ and $x = u_su_1^{\ell_2}v_j$ with $1\leq j < t$
	would also impose a forbidden suffix for $v_t$.
	Thus, we see that $x = u_su_1^{\ell_2}v_j$ with $t\leq j < n$.
	The case $j > t$ is not possible as it implies
	$v_t\alp <_p v_j\alp = u_1^{j-t}v_t\alp$ and a word from 
	$u_1 = v_1,\ldots,v_{t-1}$ would be a suffix of $v_t$.
	Therefore, $x = u_s u_1^{\ell_2} v_t$ and since $x$ is a suffix of
	$u_s u_1^{\ell_1} v_t$ and $u_1$ is not a suffix of $u_s$
	we deduce $u_s u_1^{\ell_2} v_t = u_s u_1^{\ell_1} v_t$.	
	
	Consequentially, $z\in\Has(w)$ if and only if $\ell_1 = \ell_2$.
	This completes the proof of $\Has(w)\cap R = L$. \qed
\end{proof}

Combining Theorems~\ref{thm:sufficient} and~\ref{thm:necessary}, we conclude:

\begin{corollary}\label{cor:condition}
	Let $w\in\aSSa$ be a non-crossing $(m,n)$-word with $m,n\geq 2$,
	let $\Pa(w) = \oneset{u_0,\ldots,u_{m-1}}$ such that $u_0 <_p \cdots <_p u_{m-1}$,
	let $\Sa(w) = \oneset{\ov{v_0},\ldots,\ov{v_{v-1}}}$ such that
	$\ov{v_{n-1}} >_s \cdots >_s \ov{v_0}$,
	and let $u_1 = v_1$.

	$\Has(w)$ is regular if and only if
	\begin{enumerate}
		\item for all $1\leq s < m$ we have $u_s \in \ov{\Sa(w)}^*$
			or $\ov{\Sa(w)}\sse \oneset{u_1,\ldots,u_s}^*$ and
		\item for all $1\leq t < n$ we have $v_t \in \Pa(w)^*$
			or $\Pa(w)\sse \oneset{v_1,\ldots,v_t}^*$.
	\end{enumerate}
\end{corollary}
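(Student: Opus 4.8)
The plan is to observe that the two numbered conditions in the corollary are, simultaneously, exactly the hypotheses of Theorem~\ref{thm:sufficient} and exactly the negations of the hypotheses of Theorem~\ref{thm:necessary}. Once this translation is in place the corollary follows at once: the ``if'' direction becomes Theorem~\ref{thm:sufficient}, and the ``only if'' direction becomes the contrapositive of Theorem~\ref{thm:necessary}.

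First I would record the elementary fact that for a submonoid $M\sse\Sig^*$ and any language $X\sse\Sig^*$ one has $X\sse M$ if and only if $X^*\sse M$ (the nontrivial implication uses only that $M$ is closed under concatenation). Applying this with $M=\oneset{u_1,\ldots,u_s}^*$ turns the condition $\ov{\Sa(w)}^*\sse\oneset{u_1,\ldots,u_s}^*$ of Theorem~\ref{thm:sufficient} into the condition $\ov{\Sa(w)}\sse\oneset{u_1,\ldots,u_s}^*$ of the corollary; condition~2 of the two statements already coincides. Note also that, because $u_0=v_0=\e$, we have $\Pa(w)^*=\oneset{u_1,\ldots,u_{m-1}}^*$ and $\ov{\Sa(w)}^*=\oneset{v_1,\ldots,v_{n-1}}^*$. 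Hence the ``if'' direction of the corollary is precisely Theorem~\ref{thm:sufficient}.

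For the ``only if'' direction I would argue by contraposition using Theorem~\ref{thm:necessary}. Suppose condition~1 of the corollary fails. Then there is $1\le s<m$ with $u_s\notin\ov{\Sa(w)}^*=\oneset{v_1,\ldots,v_{n-1}}^*$, and since $\ov{\Sa(w)}=\oneset{v_0,\ldots,v_{n-1}}$ with $v_0=\e$ lying in every submonoid, the failure of $\ov{\Sa(w)}\sse\oneset{u_1,\ldots,u_s}^*$ produces an index $1\le t<n$ with $v_t\notin\oneset{u_1,\ldots,u_s}^*$. This is exactly the hypothesis of statement~1 of Theorem~\ref{thm:necessary}, so $\Has(w)$ is not regular. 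Symmetrically, if condition~2 of the corollary fails, there is $1\le t<n$ with $v_t\notin\Pa(w)^*=\oneset{u_1,\ldots,u_{m-1}}^*$ and an index $1\le s<m$ with $u_s\notin\oneset{v_1,\ldots,v_t}^*$, which is the hypothesis of statement~2 of Theorem~\ref{thm:necessary}, and again $\Has(w)$ is not regular. Therefore, whenever either condition fails, $\Has(w)$ is not regular, which is the contrapositive of the ``only if'' direction.

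There is no genuine obstacle in this argument; it is essentially bookkeeping, the real content sitting in Theorems~\ref{thm:sufficient} and~\ref{thm:necessary}. The only points that demand care are the passage between $X\sse M$ and $X^*\sse M$ for the relevant submonoids, and keeping track of the conventions $u_0=v_0=\e$, so that the existential quantifiers produced by negating the corollary's conditions land exactly in the ranges $1\le s<m$ and $1\le t<n$ required by Theorem~\ref{thm:necessary}.
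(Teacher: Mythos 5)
Your proposal is correct and follows essentially the same route as the paper: the ``if'' direction is Theorem~\ref{thm:sufficient} (after the harmless passage between $\ov{\Sa(w)}\sse\oneset{u_1,\ldots,u_s}^*$ and $\ov{\Sa(w)}^*\sse\oneset{u_1,\ldots,u_s}^*$), and the ``only if'' direction is the contrapositive of Theorem~\ref{thm:necessary}, with a failure of condition~1 feeding statement~1 and a failure of condition~2 feeding statement~2. The only difference is cosmetic: the paper spells out just the first of these two symmetric cases, while you (correctly) write out both and make the $u_0=v_0=\e$ bookkeeping explicit.
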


\begin{proof}
	Theorem~\ref{thm:sufficient} provides the if-part.
	
	For the only-if-part, assume that condition 1 is breached.
	There exists $1\leq s < m$ such that $u_s\notin\ov{\Sa(w)}^*$
	and $\ov{\Sa(w)}\setminus \oneset{u_1,\ldots,u_s}^* \neq \es$.
	Let $1\leq t < n$ such that $v_t \in \ov{\Sa(w)}\setminus \oneset{u_1,\ldots,u_s}^*$.
	We see that $s,t$ are indices satisfying the conditions of statement 1 in
	Theorem~\ref{thm:necessary} and, therefore, $\Has(w)$ is not regular.
\end{proof}

Thus, we provided a necessary and sufficient condition for the regularity
of a non-crossing $(m,n)$-word.
As one can easily observe, this condition is decidable.

\begin{corollary}\label{cor:decidable}
	For a given non-$\alp$-crossing word $w$,
	it is decidable whether or not its iterated hairpin completion, $\Has(w)$, is regular.
\end{corollary}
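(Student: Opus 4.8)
The plan is to reduce the general decidability question to the concrete, checkable condition established in Corollary~\ref{cor:condition}. First I would handle the trivial and boundary cases: if $w$ is not in $\aSSa$ we can simply ignore the primers (or report that $\Has(w)$ is a singleton, hence regular); if $\alp = \ov\alp$ then, as noted just before Proposition~\ref{prop:nonx_HC}, $w$ is either $\alp$ itself (regular) or crossing, so the hypothesis does not apply. So assume $\alp \neq \ov\alp$ and $w \in \aSSa$ is non-$\alp$-crossing.

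Next I would split on $(m,n) = (\abs{\Pa(w)}, \abs{\Sa(w)})$. If $m = 1$ or $n = 1$, the cited theorem from \cite{KariKopeckiSeki} tells us $\Has(w)$ is regular unconditionally, so the algorithm outputs ``regular.'' Otherwise $m, n \ge 2$. Here I would further branch on whether $u_1 = v_1$. If $u_1 \neq v_1$, Theorem~\ref{thm:uequalv} says $\Has(w)$ is regular if and only if $\Has(w\ov{u_i})$ is regular for all $i \in I$ and $\Has(v_j w)$ is regular for all $j \in J$, and for each of these finitely many words the shortest $\alp$-prefix is complementary to the shortest $\ov\alp$-suffix, so we recurse into the $u_1 = v_1$ case. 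When $u_1 = v_1$, Corollary~\ref{cor:condition} gives the explicit criterion: regularity holds iff for every $1 \le s < m$ either $u_s \in \ov{\Sa(w)}^*$ or $\ov{\Sa(w)} \sse \oneset{u_1,\ldots,u_s}^*$, and symmetrically for every $1 \le t < n$.

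The remaining point is that all the quantities in this criterion are computable from $w$ and $\alp$. The sets $\Pa(w)$, $\Sa(w)$, and the words $w\ov{u_i}$, $v_j w$ are obtained by scanning $w$ for occurrences of $\alp$ and $\ov\alp$; there are at most $\abs w$ of each, so $m, n \le \abs w$ and the index sets $I$, $J$ are finite and effectively listable. Membership of a fixed word in a language of the form $F^*$ for a finite set $F$ of words is decidable (e.g.\ by the standard dynamic-programming/automaton construction for $F^*$), and the inclusion $\ov{\Sa(w)} \sse \oneset{u_1,\ldots,u_s}^*$ is just finitely many such membership tests. Hence each clause of the condition in Corollary~\ref{cor:condition} can be checked, and the whole algorithm terminates. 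I do not anticipate a real obstacle here; the only mild subtlety is bookkeeping the reduction from $u_1 \neq v_1$ to $u_1 = v_1$ — one must verify that each $w\ov{u_i}$ and $v_j w$ is again a non-crossing word in $\aSSa$ with both $\alp$-parameters $\ge 1$ so that the cited and proved results apply to it, which follows from Proposition~\ref{prop:nonx_HC} and the observation recorded just before Theorem~\ref{thm:uequalv} that these words have complementary shortest $\alp$-prefix and $\ov\alp$-suffix.
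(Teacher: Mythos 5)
Your proposal is correct and takes essentially the same route as the paper, whose proof of this corollary is just the observation that the criterion assembled from the $m=1$ or $n=1$ theorem, Theorem~\ref{thm:uequalv}, and Corollary~\ref{cor:condition} involves only finitely many effectively computable membership tests in languages $F^*$ for finite sets $F$ --- precisely the algorithm you describe. (Only your parenthetical claim that $\Has(w)$ is a singleton when $w\notin\aSSa$ is not justified in general, but that case lies outside the paper's standing assumption $w\in\aSSa$ and does not affect the argument.)
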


Furthermore, we can derive from the proof of Theorem~\ref{thm:necessary}
that if the iterated hairpin completion of $w$ is not regular, then
the intersection of $\Has(w)$ with $R = u_su_1^{\ge n} v_t w \ov{u_1}^{\ge n} \ov{u_s}$
is not a regular language (for suitable $s,t$ and in case $u_1 = v_1$).
More precisely, we obtained the context-free language
\begin{equation*}
	\Has(w) \cap R = \set{u_su_1^\ell v_t w \ov{u_1}^\ell \ov{u_s}}
		{\ell \geq n}.
\end{equation*}

Consider we intersect $\Has(w)$ with $R' = (u_su_1^{\ge n} v_t)^2 w \ov{u_1}^{\ge n} \ov{u_s}$.
Using the same arguments as we did within the proof of Theorem~\ref{thm:necessary},
we can show that
\begin{equation*}
	\Has(w) \cap R' = \set{u_su_1^\ell v_tu_su_1^\ell v_t w \ov{u_1}^\ell \ov{u_s}}
		{\ell\geq n},
\end{equation*}
which is a non-context-free language.
Using this idea we can proof that if $\Has(w)$ is not regular, it is not context-free either.
The details of this proof are left for the interested reader.

\begin{corollary}\label{cor:noncf}
	Let $w$ be a non-$\alp$-crossing word.
	If its iterated hairpin completion $\Has(w)$ is not regular,
	then $\Has(w)$ is not context-free.
\end{corollary}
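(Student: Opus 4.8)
The plan is to contrapose: assuming $\Has(w)$ is regular is already handled, so suppose $\Has(w)$ is not regular and deduce it is not context-free. By Theorem~\ref{thm:uequalv} it suffices to treat a word $w$ with $u_1 = v_1$ (if $\Has(w\ov{u_i})$ or $\Has(v_jw)$ fails to be context-free for some $i\in I$ or $j\in J$, then, since $\Has(w)$ is the disjoint union of these iterated completions together with $\oneset w$, and since each summand can be carved out of $\Has(w)$ by intersection with a regular language as in the proof of Theorem~\ref{thm:uequalv}, non-context-freeness of a summand propagates to $\Has(w)$ because context-free languages are closed under intersection with regular languages). So assume $u_1 = v_1$ and that $\Has(w)$ is not regular; by Corollary~\ref{cor:condition} one of the two regularity conditions is breached, and as in the proof of Theorem~\ref{thm:necessary} we obtain indices $1\le s<m$ and $1\le t<n$ with $s\le t$ satisfying the hypotheses of statement~1 of Theorem~\ref{thm:necessary}, together with the key structural fact established there: none of $u_1,\ldots,u_s,v_1,\ldots,v_{t-1}$ is a suffix of $v_t$, and none of $u_1=v_1,\ldots,v_{n-1}$ is a suffix of $u_s$.

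Next I would intersect $\Has(w)$ with the regular language
\begin{equation*}
	R' = (u_su_1^{\ge n}v_t)^2\, w\, \ov{u_1}^{\ge n}\ov{u_s}
\end{equation*}
and claim
\begin{equation*}
	\Has(w)\cap R' = \set{u_su_1^\ell v_t\, u_su_1^\ell v_t\, w\, \ov{u_1}^\ell\ov{u_s}}{\ell\ge n} =: L'.
\end{equation*}
The inclusion $L'\subseteq\Has(w)\cap R'$ is witnessed by the derivation $w\ra_\RH^* w\ov{u_1}^\ell\ra_\RH w\ov{u_1}^\ell\ov{u_s}\ra_\LH u_su_1^\ell v_t w\ov{u_1}^\ell\ov{u_s}\ra_\LH u_su_1^\ell v_t u_su_1^\ell v_t w\ov{u_1}^\ell\ov{u_s}$, using at each left step that the current prefix begins with $u_s\alp$ so that $\ov\alp\ov{u_s}$ at the right end (or the appropriate $\ov{u_1}$-block / $\ov{v_t}$) can initiate the completion, exactly as in Theorem~\ref{thm:necessary}. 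For the reverse inclusion, take $z = u_su_1^{a}v_t u_su_1^{b}v_t w\ov{u_1}^{c}\ov{u_s}\in\Has(w)\cap R'$ with $a,b,c\ge n$; running the same ``right-most word in the derivation'' argument as in Theorem~\ref{thm:necessary}, but now bracketing off the \emph{two} copies of $u_su_1^* v_t$ on the left one after the other, the forbidden-suffix facts force the derivation to have the shape $w\ra_\RH^* w\ov{u_1}^{c}\ov{u_s}\ra_\LH^* z$ with the left part built by appending $u_s u_1^{c} v_t$ twice; matching the exponents then gives $a = b = c$, so $z\in L'$.

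Finally I would invoke the pumping lemma for context-free languages (or Parikh's theorem / Ogden's lemma) on $L'$: a word $u_su_1^\ell v_t u_su_1^\ell v_t w\ov{u_1}^\ell\ov{u_s}$ has three separated blocks whose exponents must move in lockstep, which no context-free grammar can enforce (a pumped decomposition $uvwxy$ can touch at most two of the three $u_1$-blocks), so $L'$ is not context-free; since $L' = \Has(w)\cap R'$ with $R'$ regular and context-free languages are closed under intersection with regular sets, $\Has(w)$ is not context-free. The main obstacle is the reverse inclusion $\Has(w)\cap R'\subseteq L'$: one must verify that the intricate case analysis of Theorem~\ref{thm:necessary}—in particular the claim that no proper suffix of $v_t$ (nor of $u_s$) lies in the relevant sets of $\alp$-prefixes—still cleanly controls the derivation when a second copy of the block $u_su_1^\ell v_t$ is present, i.e.\ that the second left-appended block cannot ``steal'' letters from, or be confused with, the first; this is routine given the suffix-freeness already proved, but it is where the real work lies, and it is precisely the argument the excerpt defers with ``the details of this proof are left for the interested reader.''
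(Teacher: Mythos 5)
Your proposal follows essentially the same route as the paper: intersect $\Has(w)$ with the regular set $R' = (u_su_1^{\ge n}v_t)^2\, w\, \ov{u_1}^{\ge n}\ov{u_s}$, show via the argument of Theorem~\ref{thm:necessary} that the intersection is the lockstep language $\set{u_su_1^\ell v_t u_su_1^\ell v_t w \ov{u_1}^\ell\ov{u_s}}{\ell\ge n}$, which is not context-free, and conclude by closure of context-free languages under intersection with regular sets (your added reduction to the case $u_1=v_1$ via Theorem~\ref{thm:uequalv} is also the intended handling of the general case). Like the paper, you defer the detailed verification of the reverse inclusion $\Has(w)\cap R'\sse L'$, which is exactly the part the authors leave to the reader, so the proposal is correct at the same level of detail as the paper's own sketch.
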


\section*{Final Remarks}

We prove that regularity of iterated hairpin completion a given of non-crossing
word is decidable.
The general case, including that of crossing words, remains to be explored.


\newcommand{\Ju}{Ju}\newcommand{\Ph}{Ph}\newcommand{\Th}{Th}\newcommand{\Ch}{C%
h}\newcommand{\Yu}{Yu}\newcommand{\Zh}{Zh}

\end{document}